\journal{Mathematical Social Sciences}
\newtheorem{theorem}{Theorem}
\newtheorem{axiom}[theorem]{Axiom}
\newtheorem{conjecture}[theorem]{Conjecture}
\newtheorem{corollary}[theorem]{Corollary}
\newtheorem{definition}[theorem]{Definition}
\newtheorem{example}[theorem]{Example}
\newtheorem{exercise}[theorem]{Exercise}
\newtheorem{lemma}[theorem]{Lemma}
\newtheorem{notation}[theorem]{Notation}
\newtheorem{proposition}[theorem]{Proposition}
\newtheorem{remark}[theorem]{Remark}
\newenvironment{proof}[1][Proof]{\noindent\textbf{#1.} }{\ \rule{0.5em}{0.5em}}
\let\pdfoutput=\undefined\fi
\chardef\@x10\chardef\@xv60
\def\tcitime{
\def\@time{%
  \@minute\time\@hour\@minute\divide\@hour\@xv
  \ifnum\@hour<\@x 0\fi\the\@hour:%
  \multiply\@hour\@xv\advance\@minute-\@hour
  \ifnum\@minute<\@x 0\fi\the\@minute
  }}%
\def\x@hyperref#1#2#3{%
   \catcode`\~ = 12
   \catcode`\$ = 12
   \catcode`\_ = 12
   \catcode`\# = 12
   \catcode`\& = 12
   \catcode`\% = 12
   \y@hyperref{#1}{#2}{#3}%
}
\def\y@hyperref#1#2#3#4{%
   #2\ref{#4}#3
   \catcode`\~ = 13
   \catcode`\$ = 3
   \catcode`\_ = 8
   \catcode`\# = 6
   \catcode`\& = 4
   \catcode`\% = 14
}
\def\QCTOpt[#1]#2{%
  \def\QCTOptB{#1}
  \def\QCTOptA{#2}
}
\def\QCTNOpt#1{%
  \def\QCTOptA{#1}
  \let\QCTOptB\empty
}
\def\Qct{%
  \@ifnextchar[{%
    \QCTOpt}{\QCTNOpt}
}
\def\QCBOpt[#1]#2{%
  \def\QCBOptB{#1}%
  \def\QCBOptA{#2}%
}
\def\QCBNOpt#1{%
  \def\QCBOptA{#1}%
  \let\QCBOptB\empty
}
\def\Qcb{%
  \@ifnextchar[{%
    \QCBOpt}{\QCBNOpt}%
}
\def\PrepCapArgs{%
  \ifx\QCBOptA\empty
    \ifx\QCTOptA\empty
      {}%
    \else
      \ifx\QCTOptB\empty
        {\QCTOptA}%
      \else
        [\QCTOptB]{\QCTOptA}%
      \fi
    \fi
  \else
    \ifx\QCBOptA\empty
      {}%
    \else
      \ifx\QCBOptB\empty
        {\QCBOptA}%
      \else
        [\QCBOptB]{\QCBOptA}%
      \fi
    \fi
  \fi
}
\def\GRAPHICSPS#1{%
 \ifcase\GRAPHICSTYPE
   \special{ps: #1}%
 \or
   \special{language "PS", include "#1"}%
 \fi
}%
\def\graffile#1#2#3#4{%
    \bgroup
	   \@inlabelfalse
       \leavevmode
       \@ifundefined{bbl@deactivate}{\def~{\string~}}{\activesoff}%
        \raise -#4 \BOXTHEFRAME{%
           \hbox to #2{\raise #3\hbox to #2{\null #1\hfil}}}%
    \egroup
}%
\def\draftbox#1#2#3#4{%
 \leavevmode\raise -#4 \hbox{%
  \frame{\rlap{\protect\tiny #1}\hbox to #2%
   {\vrule height#3 width\z@ depth\z@\hfil}%
  }%
 }%
}%
\let\nographics=\@msidraft
\newif\ifwasdraft
\def\GRAPHIC#1#2#3#4#5{%
   \ifnum\@msidraft=\@ne\draftbox{#2}{#3}{#4}{#5}%
   \else\graffile{#1}{#3}{#4}{#5}%
   \fi
}
\def\addtoLaTeXparams#1{%
    \edef\LaTeXparams{\LaTeXparams #1}}%
\newif\ifBoxFrame \BoxFramefalse
\newif\ifOverFrame \OverFramefalse
\newif\ifUnderFrame \UnderFramefalse
\def\BOXTHEFRAME#1{%
   \hbox{%
      \ifBoxFrame
         \frame{#1}%
      \else
         {#1}%
      \fi
   }%
}
\def\doFRAMEparams#1{\BoxFramefalse\OverFramefalse\UnderFramefalse\readFRAMEparams#1\end}%
\def\readFRAMEparams#1{%
 \ifx#1\end%
  \let\next=\relax
  \else
  \ifx#1i\dispkind=\z@\fi
  \ifx#1d\dispkind=\@ne\fi
  \ifx#1f\dispkind=\tw@\fi
  \ifx#1t\addtoLaTeXparams{t}\fi
  \ifx#1b\addtoLaTeXparams{b}\fi
  \ifx#1p\addtoLaTeXparams{p}\fi
  \ifx#1h\addtoLaTeXparams{h}\fi
  \ifx#1X\BoxFrametrue\fi
  \ifx#1O\OverFrametrue\fi
  \ifx#1U\UnderFrametrue\fi
  \ifx#1w
    \ifnum\@msidraft=1\wasdrafttrue\else\wasdraftfalse\fi
    \@msidraft=\@ne
  \fi
  \let\next=\readFRAMEparams
  \fi
 \next
 }%
\def\IFRAME#1#2#3#4#5#6{%
      \bgroup
      \let\QCTOptA\empty
      \let\QCTOptB\empty
      \let\QCBOptA\empty
      \let\QCBOptB\empty
      #6%
      \parindent=0pt
      \leftskip=0pt
      \rightskip=0pt
      \setbox0=\hbox{\QCBOptA}%
      \@tempdima=#1\relax
      \ifOverFrame
          \typeout{This is not implemented yet}%
          \show\HELP
      \else
         \ifdim\wd0>\@tempdima
            \advance\@tempdima by \@tempdima
            \ifdim\wd0 >\@tempdima
               \setbox1 =\vbox{%
                  \unskip\hbox to \@tempdima{\hfill\GRAPHIC{#5}{#4}{#1}{#2}{#3}\hfill}%
                  \unskip\hbox to \@tempdima{\parbox[b]{\@tempdima}{\QCBOptA}}%
               }%
               \wd1=\@tempdima
            \else
               \textwidth=\wd0
               \setbox1 =\vbox{%
                 \noindent\hbox to \wd0{\hfill\GRAPHIC{#5}{#4}{#1}{#2}{#3}\hfill}\\%
                 \noindent\hbox{\QCBOptA}%
               }%
               \wd1=\wd0
            \fi
         \else
            \ifdim\wd0>0pt
              \hsize=\@tempdima
              \setbox1=\vbox{%
                \unskip\GRAPHIC{#5}{#4}{#1}{#2}{0pt}%
                \break
                \unskip\hbox to \@tempdima{\hfill \QCBOptA\hfill}%
              }%
              \wd1=\@tempdima
           \else
              \hsize=\@tempdima
              \setbox1=\vbox{%
                \unskip\GRAPHIC{#5}{#4}{#1}{#2}{0pt}%
              }%
              \wd1=\@tempdima
           \fi
         \fi
         \@tempdimb=\ht1
         \advance\@tempdimb by -#2
         \advance\@tempdimb by #3
         \leavevmode
         \raise -\@tempdimb \hbox{\box1}%
      \fi
      \egroup%
}%
\def\DFRAME#1#2#3#4#5{%
  \vspace\topsep
  \hfil\break
  \bgroup
     \leftskip\@flushglue
	 \rightskip\@flushglue
	 \parindent\z@
	 \parfillskip\z@skip
     \let\QCTOptA\empty
     \let\QCTOptB\empty
     \let\QCBOptA\empty
     \let\QCBOptB\empty
	 \vbox\bgroup
        \ifOverFrame 
           #5\QCTOptA\par
        \fi
        \GRAPHIC{#4}{#3}{#1}{#2}{\z@}%
        \ifUnderFrame 
           \break#5\QCBOptA
        \fi
	 \egroup
  \egroup
  \vspace\topsep
  \break
}%
\def\FFRAME#1#2#3#4#5#6#7{%
  \@ifundefined{floatstyle}
    {
     \begin{figure}[#1]%
    }
    {
	 \ifx#1h
      \begin{figure}[H]%
	 \else
      \begin{figure}[#1]%
	 \fi
	}
  \let\QCTOptA\empty
  \let\QCTOptB\empty
  \let\QCBOptA\empty
  \let\QCBOptB\empty
  \ifOverFrame
    #4
    \ifx\QCTOptA\empty
    \else
      \ifx\QCTOptB\empty
        \caption{\QCTOptA}%
      \else
        \caption[\QCTOptB]{\QCTOptA}%
      \fi
    \fi
    \ifUnderFrame\else
      \label{#5}%
    \fi
  \else
    \UnderFrametrue%
  \fi
  \begin{center}\GRAPHIC{#7}{#6}{#2}{#3}{\z@}\end{center}%
  \ifUnderFrame
    #4
    \ifx\QCBOptA\empty
      \caption{}%
    \else
      \ifx\QCBOptB\empty
        \caption{\QCBOptA}%
      \else
        \caption[\QCBOptB]{\QCBOptA}%
      \fi
    \fi
    \label{#5}%
  \fi
  \end{figure}%
 }%
\def\makeactives{
  \catcode`\"=\active
  \catcode`\;=\active
  \catcode`\:=\active
  \catcode`\'=\active
  \catcode`\~=\active
}
   \gdef\activesoff{%
      \def"{\string"}%
      \def;{\string;}%
      \def:{\string:}%
      \def'{\string'}%
      \def~{\string~}%
    }
\def\FRAME#1#2#3#4#5#6#7#8{%
 \bgroup
 \ifnum\@msidraft=\@ne
   \wasdrafttrue
 \else
   \wasdraftfalse%
 \fi
 \def\LaTeXparams{}%
 \dispkind=\z@
 \def\LaTeXparams{}%
 \doFRAMEparams{#1}%
 \ifnum\dispkind=\z@\IFRAME{#2}{#3}{#4}{#7}{#8}{#5}\else
  \ifnum\dispkind=\@ne\DFRAME{#2}{#3}{#7}{#8}{#5}\else
   \ifnum\dispkind=\tw@
    \edef\@tempa{\noexpand\FFRAME{\LaTeXparams}}%
    \@tempa{#2}{#3}{#5}{#6}{#7}{#8}%
    \fi
   \fi
  \fi
  \ifwasdraft\@msidraft=1\else\@msidraft=0\fi{}%
  \egroup
 }%
\def\TEXUX#1{"texux"}
\def\limfunc#1{\mathop{\rm #1}}%
\def\func#1{\mathop{\rm #1}\nolimits}%
\long\def\QQQ#1#2{%
     \long\expandafter\def\csname#1\endcsname{#2}}%
\long\def\QQA#1#2{}%
\def\QTR#1#2{{\csname#1\endcsname {#2}}}%
\def\EXPAND#1[#2]#3{}%
\def\NOEXPAND#1[#2]#3{}%
\def\LaTeXparent#1{}%
\def\ChildStyles#1{}%
\def\ChildDefaults#1{}%
\def\QTagDef#1#2#3{}%
  \providecommand{\UNICODE}[2][]{\protect\rule{.1in}{.1in}}
  \providecommand{\U}[1]{\protect\rule{.1in}{.1in}}
\def\QQfnmark#1{\footnotemark}
 \def\abstract{%
  \if@twocolumn
   \section*{Abstract (Not appropriate in this style!)}%
   \else \small 
   \begin{center}{\bf Abstract\vspace{-.5em}\vspace{\z@}}\end{center}%
   \quotation 
   \fi
  }%
   \def\registered{\relax\ifmmode{}\r@gistered
                    \else$\m@th\r@gistered$\fi}%
 \def\r@gistered{^{\ooalign
  {\hfil\raise.07ex\hbox{$\scriptstyle\rm\text{R}$}\hfil\crcr
  \mathhexbox20D}}}}{}%
\newdimen\theight
\def\newfmtname{LaTeX2e}
  \DeclareOldFontCommand{\rm}{\normalfont\rmfamily}{\mathrm}
  \DeclareOldFontCommand{\sf}{\normalfont\sffamily}{\mathsf}
  \DeclareOldFontCommand{\tt}{\normalfont\ttfamily}{\mathtt}
  \DeclareOldFontCommand{\bf}{\normalfont\bfseries}{\mathbf}
  \DeclareOldFontCommand{\it}{\normalfont\itshape}{\mathit}
  \DeclareOldFontCommand{\sl}{\normalfont\slshape}{\@nomath\sl}
  \DeclareOldFontCommand{\sc}{\normalfont\scshape}{\@nomath\sc}
\def\alpha{{\Greekmath 010B}}%
\def\beta{{\Greekmath 010C}}%
\def\gamma{{\Greekmath 010D}}%
\def\delta{{\Greekmath 010E}}%
\def\epsilon{{\Greekmath 010F}}%
\def\zeta{{\Greekmath 0110}}%
\def\eta{{\Greekmath 0111}}%
\def\theta{{\Greekmath 0112}}%
\def\iota{{\Greekmath 0113}}%
\def\kappa{{\Greekmath 0114}}%
\def\lambda{{\Greekmath 0115}}%
\def\mu{{\Greekmath 0116}}%
\def\nu{{\Greekmath 0117}}%
\def\xi{{\Greekmath 0118}}%
\def\pi{{\Greekmath 0119}}%
\def\rho{{\Greekmath 011A}}%
\def\sigma{{\Greekmath 011B}}%
\def\tau{{\Greekmath 011C}}%
\def\upsilon{{\Greekmath 011D}}%
\def\phi{{\Greekmath 011E}}%
\def\chi{{\Greekmath 011F}}%
\def\psi{{\Greekmath 0120}}%
\def\omega{{\Greekmath 0121}}%
\def\varepsilon{{\Greekmath 0122}}%
\def\vartheta{{\Greekmath 0123}}%
\def\varpi{{\Greekmath 0124}}%
\def\varrho{{\Greekmath 0125}}%
\def\varsigma{{\Greekmath 0126}}%
\def\varphi{{\Greekmath 0127}}%
\def\nabla{{\Greekmath 0272}}
\def\FindBoldGroup{%
   {\setbox0=\hbox{$\mathbf{x\global\edef\theboldgroup{\the\mathgroup}}$}}%
}
\def\Greekmath#1#2#3#4{%
    \if@compatibility
        \ifnum\mathgroup=\symbold
           \mathchoice{\mbox{\boldmath$\displaystyle\mathchar"#1#2#3#4$}}%
                      {\mbox{\boldmath$\textstyle\mathchar"#1#2#3#4$}}%
                      {\mbox{\boldmath$\scriptstyle\mathchar"#1#2#3#4$}}%
                      {\mbox{\boldmath$\scriptscriptstyle\mathchar"#1#2#3#4$}}%
        \else
           \mathchar"#1#2#3#4%
        \fi 
    \else 
        \FindBoldGroup
        \ifnum\mathgroup=\theboldgroup 
           \mathchoice{\mbox{\boldmath$\displaystyle\mathchar"#1#2#3#4$}}%
                      {\mbox{\boldmath$\textstyle\mathchar"#1#2#3#4$}}%
                      {\mbox{\boldmath$\scriptstyle\mathchar"#1#2#3#4$}}%
                      {\mbox{\boldmath$\scriptscriptstyle\mathchar"#1#2#3#4$}}%
        \else
           \mathchar"#1#2#3#4%
        \fi     	    
	  \fi}
\newif\ifGreekBold  \GreekBoldfalse
\let\SAVEPBF=\pbf
\def\pbf{\GreekBoldtrue\SAVEPBF}%
  \newcounter{equationnumber}  
  \def\mathletters{%
     \addtocounter{equation}{1}
     \edef\@currentlabel{\theequation}%
     \setcounter{equationnumber}{\c@equation}
     \setcounter{equation}{0}%
     \edef\theequation{\@currentlabel\noexpand\alph{equation}}%
  }
    \def\BibTeX{{\rm B\kern-.05em{\sc i\kern-.025em b}\kern-.08em
                 T\kern-.1667em\lower.7ex\hbox{E}\kern-.125emX}}}{}%
\def\AmS{{\protect\usefont{OMS}{cmsy}{m}{n}%
                A\kern-.1667em\lower.5ex\hbox{M}\kern-.125emS}}}{}%
\def\@@eqncr{\let\@tempa\relax
    \ifcase\@eqcnt \def\@tempa{& & &}\or \def\@tempa{& &}%
      \else \def\@tempa{&}\fi
     \@tempa
     \if@eqnsw
        \iftag@
           \@taggnum
        \else
           \@eqnnum\stepcounter{equation}%
        \fi
     \fi
     \global\tag@false
     \global\@eqnswtrue
     \global\@eqcnt\z@\cr}
\def\TCItag{\@ifnextchar*{\@TCItagstar}{\@TCItag}}
\def\@TCItag#1{%
    \global\tag@true
    \global\def\@taggnum{(#1)}%
    \global\def\@currentlabel{#1}}
\def\@TCItagstar*#1{%
    \global\tag@true
    \global\def\@taggnum{#1}%
    \global\def\@currentlabel{#1}}
\def\tint{\msi@int\textstyle\int}%
\def\tiint{\msi@int\textstyle\iint}%
\def\tiiint{\msi@int\textstyle\iiint}%
\def\tiiiint{\msi@int\textstyle\iiiint}%
\def\tidotsint{\msi@int\textstyle\idotsint}%
\def\toint{\msi@int\textstyle\oint}%
\newtoks\temptoksa
\newtoks\temptoksb
\newtoks\temptoksc
\def\msi@int#1#2{%
 \def\@temp{{#1#2\the\temptoksc_{\the\temptoksa}^{\the\temptoksb}}}%
 \futurelet\@nextcs
 \@int
}
\def\@int{%
   \ifx\@nextcs\limits
      \typeout{Found limits}%
      \temptoksc={\limits}%
	  \let\@next\@intgobble%
   \else\ifx\@nextcs\nolimits
      \typeout{Found nolimits}%
      \temptoksc={\nolimits}%
	  \let\@next\@intgobble%
   \else
      \typeout{Did not find limits or no limits}%
      \temptoksc={}%
      \let\@next\msi@limits%
   \fi\fi
   \@next   
}%
\def\@intgobble#1{%
   \typeout{arg is #1}%
   \msi@limits
}
\def\msi@limits{%
   \temptoksa={}%
   \temptoksb={}%
   \@ifnextchar_{\@limitsa}{\@limitsb}%
}
\def\@limitsa_#1{%
   \temptoksa={#1}%
   \@ifnextchar^{\@limitsc}{\@temp}%
}
\def\@limitsb{%
   \@ifnextchar^{\@limitsc}{\@temp}%
}
\def\@limitsc^#1{%
   \temptoksb={#1}%
   \@ifnextchar_{\@limitsd}{\@temp}%
}
\def\@limitsd_#1{%
   \temptoksa={#1}%
   \@temp
}
\def\dint{\msi@int\displaystyle\int}%
\def\diint{\msi@int\displaystyle\iint}%
\def\diiint{\msi@int\displaystyle\iiint}%
\def\diiiint{\msi@int\displaystyle\iiiint}%
\def\didotsint{\msi@int\displaystyle\idotsint}%
\def\doint{\msi@int\displaystyle\oint}%
\def\ExitTCILatex{\makeatother }
\if@compatibility\message{amsmath already loaded}\fi\aftergroup\ExitTCILatex}
\if@compatibility\message{amstex already loaded}\fi\aftergroup\ExitTCILatex}
\if@compatibility\message{amsgen already loaded}\fi\aftergroup\ExitTCILatex}
\let\DOTSI\relax
\def\RIfM@{\relax\ifmmode}%
\def\FN@{\futurelet\next}%
\def\iint{\DOTSI\intno@\tw@\FN@\ints@}%
\def\iiint{\DOTSI\intno@\thr@@\FN@\ints@}%
\def\iiiint{\DOTSI\intno@4 \FN@\ints@}%
\def\idotsint{\DOTSI\intno@\z@\FN@\ints@}%
\def\ints@{\findlimits@\ints@@}%
\newif\iflimtoken@
\newif\iflimits@
\def\findlimits@{\limtoken@true\ifx\next\limits\limits@true
 \else\ifx\next\nolimits\limits@false\else
 \limtoken@false\ifx\ilimits@\nolimits\limits@false\else
 \ifinner\limits@false\else\limits@true\fi\fi\fi\fi}%
\def\multint@{\int\ifnum\intno@=\z@\intdots@                          
 \else\intkern@\fi                                                    
 \ifnum\intno@>\tw@\int\intkern@\fi                                   
 \ifnum\intno@>\thr@@\int\intkern@\fi                                 
 \int}
\def\multintlimits@{\intop\ifnum\intno@=\z@\intdots@\else\intkern@\fi
 \ifnum\intno@>\tw@\intop\intkern@\fi
 \ifnum\intno@>\thr@@\intop\intkern@\fi\intop}%
\def\intic@{%
    \mathchoice{\hskip.5em}{\hskip.4em}{\hskip.4em}{\hskip.4em}}%
\def\negintic@{\mathchoice
 {\hskip-.5em}{\hskip-.4em}{\hskip-.4em}{\hskip-.4em}}%
\def\ints@@{\iflimtoken@                                              
 \def\ints@@@{\iflimits@\negintic@
   \mathop{\intic@\multintlimits@}\limits                             
  \else\multint@\nolimits\fi                                          
  \eat@}
 \else                                                                
 \def\ints@@@{\iflimits@\negintic@
  \mathop{\intic@\multintlimits@}\limits\else
  \multint@\nolimits\fi}\fi\ints@@@}%
\def\intkern@{\mathchoice{\!\!\!}{\!\!}{\!\!}{\!\!}}%
\def\plaincdots@{\mathinner{\cdotp\cdotp\cdotp}}%
\def\intdots@{\mathchoice{\plaincdots@}%
 {{\cdotp}\mkern1.5mu{\cdotp}\mkern1.5mu{\cdotp}}%
 {{\cdotp}\mkern1mu{\cdotp}\mkern1mu{\cdotp}}%
 {{\cdotp}\mkern1mu{\cdotp}\mkern1mu{\cdotp}}}%
\def\RIfM@{\relax\protect\ifmmode}
\def\text{\RIfM@\expandafter\text@\else\expandafter\mbox\fi}
\let\nfss@text\text
\def\text@#1{\mathchoice
   {\textdef@\displaystyle\f@size{#1}}%
   {\textdef@\textstyle\tf@size{\firstchoice@false #1}}%
   {\textdef@\textstyle\sf@size{\firstchoice@false #1}}%
   {\textdef@\textstyle \ssf@size{\firstchoice@false #1}}%
   \glb@settings}
\def\textdef@#1#2#3{\hbox{{%
                    \everymath{#1}%
                    \let\f@size#2\selectfont
                    #3}}}
\newif\iffirstchoice@
\def\Let@{\relax\iffalse{\fi\let\\=\cr\iffalse}\fi}%
\def\vspace@{\def\vspace##1{\crcr\noalign{\vskip##1\relax}}}%
\def\multilimits@{\bgroup\vspace@\Let@
 \baselineskip\fontdimen10 \scriptfont\tw@
 \advance\baselineskip\fontdimen12 \scriptfont\tw@
 \lineskip\thr@@\fontdimen8 \scriptfont\thr@@
 \lineskiplimit\lineskip
 \vbox\bgroup\ialign\bgroup\hfil$\m@th\scriptstyle{##}$\hfil\crcr}%
\def\Sb{_\multilimits@}%
\def\endSb{\crcr\egroup\egroup\egroup}%
\def\Sp{^\multilimits@}%
\newdimen\ex@
\def\rightarrowfill@#1{$#1\m@th\mathord-\mkern-6mu\cleaders
 \hbox{$#1\mkern-2mu\mathord-\mkern-2mu$}\hfill
 \mkern-6mu\mathord\rightarrow$}%
\def\leftarrowfill@#1{$#1\m@th\mathord\leftarrow\mkern-6mu\cleaders
 \hbox{$#1\mkern-2mu\mathord-\mkern-2mu$}\hfill\mkern-6mu\mathord-$}%
\def\leftrightarrowfill@#1{$#1\m@th\mathord\leftarrow
\mkern-6mu\cleaders
 \hbox{$#1\mkern-2mu\mathord-\mkern-2mu$}\hfill
 \mkern-6mu\mathord\rightarrow$}%
\def\overrightarrow{\mathpalette\overrightarrow@}%
\def\overrightarrow@#1#2{\vbox{\ialign{##\crcr\rightarrowfill@#1\crcr
 \noalign{\kern-\ex@\nointerlineskip}$\m@th\hfil#1#2\hfil$\crcr}}}%
\def\overleftarrow{\mathpalette\overleftarrow@}%
\def\overleftarrow@#1#2{\vbox{\ialign{##\crcr\leftarrowfill@#1\crcr
 \noalign{\kern-\ex@\nointerlineskip}$\m@th\hfil#1#2\hfil$\crcr}}}%
\def\overleftrightarrow{\mathpalette\overleftrightarrow@}%
\def\overleftrightarrow@#1#2{\vbox{\ialign{##\crcr
   \leftrightarrowfill@#1\crcr
 \noalign{\kern-\ex@\nointerlineskip}$\m@th\hfil#1#2\hfil$\crcr}}}%
\def\underrightarrow{\mathpalette\underrightarrow@}%
\def\underrightarrow@#1#2{\vtop{\ialign{##\crcr$\m@th\hfil#1#2\hfil
  $\crcr\noalign{\nointerlineskip}\rightarrowfill@#1\crcr}}}%
\def\underleftarrow{\mathpalette\underleftarrow@}%
\def\underleftarrow@#1#2{\vtop{\ialign{##\crcr$\m@th\hfil#1#2\hfil
  $\crcr\noalign{\nointerlineskip}\leftarrowfill@#1\crcr}}}%
\def\underleftrightarrow{\mathpalette\underleftrightarrow@}%
\def\underleftrightarrow@#1#2{\vtop{\ialign{##\crcr$\m@th
  \hfil#1#2\hfil$\crcr
 \noalign{\nointerlineskip}\leftrightarrowfill@#1\crcr}}}%
\def\qopnamewl@#1{\mathop{\operator@font#1}\nlimits@}
\let\nlimits@\displaylimits
\def\setboxz@h{\setbox\z@\hbox}
\def\varlim@#1#2{\mathop{\vtop{\ialign{##\crcr
 \hfil$#1\m@th\operator@font lim$\hfil\crcr
 \noalign{\nointerlineskip}#2#1\crcr
 \noalign{\nointerlineskip\kern-\ex@}\crcr}}}}
 \def\rightarrowfill@#1{\m@th\setboxz@h{$#1-$}\ht\z@\z@
  $#1\copy\z@\mkern-6mu\cleaders
  \hbox{$#1\mkern-2mu\box\z@\mkern-2mu$}\hfill
  \mkern-6mu\mathord\rightarrow$}
\def\leftarrowfill@#1{\m@th\setboxz@h{$#1-$}\ht\z@\z@
  $#1\mathord\leftarrow\mkern-6mu\cleaders
  \hbox{$#1\mkern-2mu\copy\z@\mkern-2mu$}\hfill
  \mkern-6mu\box\z@$}
\def\projlim{\qopnamewl@{proj\,lim}}
\def\injlim{\qopnamewl@{inj\,lim}}
\def\varinjlim{\mathpalette\varlim@\rightarrowfill@}
\def\varprojlim{\mathpalette\varlim@\leftarrowfill@}
\def\varliminf{\mathpalette\varliminf@{}}
\def\varliminf@#1{\mathop{\underline{\vrule\@depth.2\ex@\@width\z@
   \hbox{$#1\m@th\operator@font lim$}}}}
\def\varlimsup{\mathpalette\varlimsup@{}}
\def\varlimsup@#1{\mathop{\overline
  {\hbox{$#1\m@th\operator@font lim$}}}}
\def\align{\@verbatim \frenchspacing\@vobeyspaces \@alignverbatim
You are using the "align" environment in a style in which it is not defined.}
\let\csname endalign*\endcsname =\endtrivlist
\def\alignat{\@verbatim \frenchspacing\@vobeyspaces \@alignatverbatim
You are using the "alignat" environment in a style in which it is not defined.}
\let\csname endalignat*\endcsname =\endtrivlist
\def\xalignat{\@verbatim \frenchspacing\@vobeyspaces \@xalignatverbatim
You are using the "xalignat" environment in a style in which it is not defined.}
\let\csname endxalignat*\endcsname =\endtrivlist
\def\gather{\@verbatim \frenchspacing\@vobeyspaces \@gatherverbatim
You are using the "gather" environment in a style in which it is not defined.}
\let\csname endgather*\endcsname =\endtrivlist
\def\multiline{\@verbatim \frenchspacing\@vobeyspaces \@multilineverbatim
You are using the "multiline" environment in a style in which it is not defined.}
\let\csname endmultiline*\endcsname =\endtrivlist
\def\arrax{\@verbatim \frenchspacing\@vobeyspaces \@arraxverbatim
You are using a type of "array" construct that is only allowed in AmS-LaTeX.}
\def\tabulax{\@verbatim \frenchspacing\@vobeyspaces \@tabulaxverbatim
You are using a type of "tabular" construct that is only allowed in AmS-LaTeX.}
\let\csname endarrax*\endcsname =\endtrivlist
\let\csname endtabulax*\endcsname =\endtrivlist
 \def\endequation{%
     \ifmmode\ifinner 
      \iftag@
        \addtocounter{equation}{-1} 
        $\hfil
           \displaywidth\linewidth\@taggnum\egroup \endtrivlist
        \global\tag@false
        \global\@ignoretrue   
      \else
        $\hfil
           \displaywidth\linewidth\@eqnnum\egroup \endtrivlist
        \global\tag@false
        \global\@ignoretrue 
      \fi
     \else   
      \iftag@
        \addtocounter{equation}{-1} 
        \eqno \hbox{\@taggnum}
        \global\tag@false%
        $$\global\@ignoretrue
      \else
        \eqno \hbox{\@eqnnum}
        $$\global\@ignoretrue
      \fi
     \fi\fi
 } 
 \newif\iftag@ \tag@false
 \def\TCItag{\@ifnextchar*{\@TCItagstar}{\@TCItag}}
 \def\@TCItag#1{%
     \global\tag@true
     \global\def\@taggnum{(#1)}%
     \global\def\@currentlabel{#1}}
 \def\@TCItagstar*#1{%
     \global\tag@true
     \global\def\@taggnum{#1}%
     \global\def\@currentlabel{#1}}
     \def\tag{\@ifnextchar*{\@tagstar}{\@tag}}
     \def\@tag#1{%
         \global\tag@true
         \global\def\@taggnum{(#1)}}
     \def\@tagstar*#1{%
         \global\tag@true
         \global\def\@taggnum{#1}}
\def\binom#1#2{{#1 \choose #2}}%
\def\dbinom#1#2{{\displaystyle {#1 \choose #2}}}%
\definecolor{color1}{rgb}{0.368417, 0.506779, 0.709798}
\definecolor{color2}{rgb}{0.880722, 0.611041, 0.142051}
\definecolor{color3}{rgb}{0.560181, 0.691569, 0.194885}
\definecolor{color4}{rgb}{0.922526, 0.385626, 0.209179}
\definecolor{color5}{rgb}{0.528488, 0.470624, 0.701351}
\begin{document}

\begin{frontmatter}
	
\title{Average Weights and Power\\
in Weighted Voting Games\tnoteref{tn1}}


\author[1]{Daria Boratyn\fnref{fn1}\fnref{fn2}}
\ead{daria.boratyn@im.uj.edu.pl}

\author[2]{Werner Kirsch\fnref{fn1}\fnref{fn3}}
\ead{werner.kirsch@fernuni-hagen.de}

\author[3]{Wojciech S\l omczy\'{n}ski\fnref{fn1}\fnref{fn2}}
\ead{wojciech.slomczynski@im.uj.edu.pl}

\author[4]{Dariusz~Stolicki\fnref{fn1}\fnref{fn4}}
\ead{dariusz.stolicki@uj.edu.pl}

\author[5]{Karol \.{Z}yczkowski\fnref{fn1}\fnref{fn5}}
\ead{karol.zyczkowski@uj.edu.pl}

\fntext[fn1]{Jagiellonian Center for Quantitative Research in Political Science, Jagiellonian University, ul. Wenecja 2, 31-117 Krak\'{o}w, Poland.}
\fntext[fn2]{Institute of Mathematics, Jagiellonian University.}
\fntext[fn3]{Fakult\"{a}t f\"{u}r Mathematik und Informatik, FernUniversit\"{a}t Hagen, Germany; Dimitris-Tsatos-Institut f\"ur Europ\"aische Verfassungswissenschaften.}
\fntext[fn4]{Institute of Political Science and International Relations, Jagiellonian University.}
\fntext[fn5]{Institute of Physics, Jagiellonian University.}

\tnotetext[tn1]{Dedicated to the memory of Fritz Haake (1941--2019) -- a theoretical physicist who promoted German-Polish scientific collaboration.}

\begin{abstract}
We investigate a class of weighted voting games for which weights are
randomly distributed over the standard probability simplex. We provide close-formed formulae
for the expectation and density of the distribution of weight of the $k$-th
largest player under the uniform distribution. We analyze the average voting
power of the $k$-th largest player and its dependence on the quota,
obtaining analytical and numerical results for small values of $n$ and a
general theorem about the functional form of the relation between the
average Penrose--Banzhaf power index and the quota for the uniform measure
on the simplex. We also analyze the power of a collectivity to act (Coleman
efficiency index) of random weighted voting games, obtaining analytical upper bounds therefor.

\end{abstract}

\begin{keyword}
random weighted voting games \sep voting power \sep Penrose--Banzhaf index \sep Coleman efficiency index \sep order statistics
\MSC[2010] Primary 91A12 \sep Secondary 60E05 \sep 60E10
\end{keyword}

\end{frontmatter}

\section{Introduction}

An $n$--player \emph{weighted voting game} $G$ is described by a \emph{%
weight vector} $\mathbf{w}:=\left( w_{1},\dots ,w_{n}\right) \in
\Delta _{n}$, where $\Delta _{n}$ is the standard $\left( n-1\right) $%
--dimensional probability simplex, and a \emph{qualified majority quota }$%
q\in \mathbb{(}\frac{1}{2},1\mathbb{]}$. In such a game, the set of winning
coalitions $\mathcal{W\subset P}(V)$, where $V$ is the set of players, is
defined as follows:%
\begin{equation}
\mathcal{W}:=\bigg\{ Q\subset V:\sum_{v\in Q}w_{v}\geq q\bigg\} .
\end{equation}%
We denote the set of all $n$-player weighted voting games by $\mathcal{G}%
_{n} $.

By \emph{random weighted voting game} we mean a weighted voting game
in which the number of players $n$ and the quota $q$ are fixed, and the
weight vector $\mathbf{w}$ is drawn from the standard probability simplex $%
\Delta _{n}$ with some probability measure. Such games seem to be
interesting for a number of reasons. First, the analysis of random weighted
voting games enhances our understanding of weighted voting games in general.
One of the major challenges in the field lies in the fact that generic
results are\ usually rather difficult to obtain, while the behavior of
weighted voting games in specific cases depends heavily on the
characteristics of the specific weight vector and is often subject to
number-theoretic peculiarities. For instance, some of the fundamental
questions touch the relationship between the quota $q$ and the influence of
individual players or efficiency of the system as a whole. Yet, for fixed
weight vectors those dependencies are not only discontinuous, but highly
erratic. Randomizing the weights, and thus averaging them over the simplex,
smooths out the peculiarities of specific weight vectors, revealing hitherto
unobserved regularities.

Second, randomizing the weights is likely to be of interest from the
standpoint of voting rule design. Rule design tends to take place before
players' weights are fixed, and thus any predictions regarding the effects
of the rules must, to the extent such effects depend on voting weights,
necessarily be probabilistic. Also, just like players' preferences are
treated as random to abstract away from particular issues and focus the
attention on the voting rules themselves \citep{Roth88a}, treating voting
weights as random further abstracts away the particular configuration of
players and brings other parameters (such as the number of players or the
quota) into the forefront.

Obviously, the characteristics of a random weighted voting game depend on
the choice of the probability measure. In the present article, we focus on
the uniform (Lebesgue) measure (which is equivalent to the familiar
Impartial Anonymous Culture Model used in computational social choice, see 
\citealp{KugaNagatani74,GehrleinFishburn76a}). For that measure we obtain exact
closed-form formulae for the expectation and density of the distribution of
voting weight of the $k$--th largest player, an analytical formula for the
expected values of product--moments of voting weights, a general theorem
about the functional form of the relation between the expected values of the
absolute and normalized Penrose--Banzhaf indices of the $k$--th largest
player and the quota, the characteristic function of the distribution of
coalition weights, and an approximation of the Coleman efficiency index (the
power of a collectivity to act). All of those results constitute an original
contribution of the paper. We further outline several applications of those
results in the field of mathematical voting theory and in some other areas.

\subsection{Related work\label{sec:relatedWork}}

The notion of \emph{voting power}, i.e., a player's influence on the outcome
of the game, which, as demonstrated by \citet{Penrose46}, is not
necessarily proportional to the player's weight, is of fundamental
importance to the study of voting systems. The two of the most popular
voting power indices have been introduced by \citet%
{ShapleyShubik54} and by \citet{Banzhaf64}. Both define the
voting power of a player $v$ in terms of the probability that their vote is
decisive, but differ in their definition of the probability measure on the
set of voting outcomes: the \emph{Shapley--Shubik index} treats each
permutation of players as equiprobable, while the \emph{Penrose--Banzhaf
index} assigns equal probabilities to all combinations of players. In
addition, there are two versions of the Penrose--Banzhaf index in common
use: one is defined as the probability of a player $v$ casting a decisive
vote and is known as the \emph{non-normalized} or \emph{absolute
Penrose--Banzhaf index}, $\psi _{v}$ \citep{DubeyShapley79}, while the other
one, $\beta _{v}$, is further normalized in order to ensure that the vector $%
\boldsymbol{\beta }:=\left( \beta _{1},\dots ,\beta _{n}\right) $ lies in
the probability simplex $\Delta _{n}$. Note that the vector of
Shapley--Shubik indices always lies in $\Delta _{n}$, hence there is no need
for further normalization.

It is well known that each player's voting power depends not only on the
weight vector, but also on the quota \citep%
{FelsenthalMachover98,LeechMachover03}. The relationship between the quota
and the Penrose--\linebreak Banzhaf power index for a fixed weight vector has been
investigated by \citet{Leech02a} and more generally by \citet{ZuckermanEtAl12}%
, with the latter reporting several results on, \textit{inter alia}, the
upper and lower bounds of the ratio and difference between a player's weight
and their normalized Penrose--Banzhaf index. Analytical results about the
values of the Penrose--Banzhaf index depending on the quota are available
primarily for extreme quotas: the \emph{Penrose limit theorem} \citep%
{Penrose46,Penrose52}, proven under certain technical assumptions by \citet%
{LindnerMachover04}, provides that for $q=1/2$ and all $i,j\in V$, the ratio 
$\psi _{i}/\psi _{j}$ converges to $w_{i}/w_{j}$ as $n\rightarrow \infty $.
On the other hand, it is easy to notice that as $q\rightarrow 1$, the values
of $\psi _{i}$ and $\beta _{i}$ converge to $2^{1-n}$ and $1/n$,
respectively, regardless of $\mathbf{w}$. \citet%
{SlomczynskiZyczkowski06,SlomczynskiZyczkowski07} have established
that $q^{\ast }:=\frac{1}{2}\left( 1+(\sum_{i=1}^{n}w_{i}^{2})^{-1}\right) $
is a good approximation of the quota minimizing the distance $\left\Vert 
\mathbf{w}-\boldsymbol{\beta }\right\Vert _{2}$. For the discussion of the
political significance of this quota, see \citet{Grimmett19}. Therefore, if $%
\mathbf{w}$ is uniformly distributed on $\Delta _{n}$, then $\mathbb{E}%
(q^{\ast })\approx \frac{1}{2}+\frac{1}{\sqrt{\pi n}}$ \citep%
{ZyczkowskiSlomczynski13}. Upper bounds for the deviation between weights
and Penrose--Banzhaf indices have been provided by \citet{Kurz18}. The
relationship between the number of dummy players, i.e., such players $v$
that $\beta _{v}=0$, and the quota has been studied by \citet%
{BarthelemyEtAl13}.

The case of random weights has been investigated only for the Shapley--Shubik
index $S_{v}$. The issue of selecting quotas maximizing and minimizing the
Shapley--Shubik power of a given player is analyzed by \citet{ZickEtAl11},
who note that testing whether a given quota does so is an NP-hard
problem. They also note that for a large range of quotas starting
with $1/2$, the Shapley--Shubik power of a small player tends to be stable
and close to their weight. \citet{JelnovTauman14} established that
if $\mathbf{w}$ is uniformly distributed on $\Delta _{n}$, the expected
ratio of Shapley--Shubik index to weight approaches $1$ as $n\rightarrow
\infty $. \citet{BachrachEtAl17} identify certain number-theoretic
artifacts in the relationship between $S_{v}$ and $q$ for weights drawn from
a multinomial distribution and normalized, and provides a lower bound for
the expected index $S_{v}$ of the smallest player $v$. A problem similar to ours
is posed by \citet{FilmusEtAl16}, who provide a closed-form
characterization of the Shapley values of the largest and smallest players
for $\mathbf{w}$ drawn from a uniform distribution on $\Delta _{n}$ or
obtained by normalizing $n$ independent random variables drawn from a
uniform distribution. Finally, \cite{BachrachEtAl16a} give a closed-form
formula for the Shapley--Shubik power index in games with
super-increasing weights.

Numerous works analyze weighted voting games in a variety of empirical
settings, including the Council of the European Union \citep%
{LaruelleWidgren98,Leech02a,FelsenthalMachover04,ZyczkowskiCichocki10,ZyczkowskiSlomczynski13},
the U.S. Electoral College \citep{Owen75,Miller13}, the International
Monetary Fund \citep{Leech02b,LeechLeech13}, the U.N. Security Council \citep%
{StrandRapkin11} and joint stock companies \citep{Leech02}. The list of
references is by no means complete, but demonstrates that the relevance of
the subject goes far beyond purely academic.

\section{Voting Weight of the $k$--th Largest Player}

\subsection{Introduction}

Let $\Delta _{n}$ be the standard $\left( n-1\right) $--dimensional
probability simplex, which represents the set of normalized weight vectors.
We consider a random weighted voting game, where the weight vector $\mathbf{W%
}\in \Delta _{n}$\ is a random variable with the uniform probability
distribution, which will be thereafter denoted as $\func{Unif}\left( \Delta
_{n}\right) $. Since the uniform measure is symmetric, the players are
indistinguishable \textit{a priori}. But note that the coordinates of $%
\mathbf{W}$, i.e., the voting weights of the players, can almost surely be
strictly ordered. This ordering provides a natural basis for distinguishing
the players \textit{a posteriori}.

\begin{notation}
For $k=1,\dots ,n$ we denote the $k$--th largest coordinate of a vector $%
\mathbf{x}\in \mathbb{R}^{n}$ as $x_{k}^{\downarrow }$.
\end{notation}

We start with the simplest question: what is the expected value and density
of the distribution of voting \linebreak weight of the $k$--th largest player in a
random weighted voting game? While the coordinates of $\mathbf{W}$\ can be
thought of as a sample of random variables, and $W_{k}^{\downarrow }$ as the 
$k$--th largest order statistic of that sample, virtually all results in the
field assume that order statistics are computed for a sample of independent
variables, which is manifestly not the case for the barycentric coordinates
of a vector drawn from a simplex. For that reason, the problem can be
considered non-trivial.

\subsection{Expected value: barycenter of the asymmetric simplex}

Each ordering of the coordinates of a generic weight vector $\mathbf{w}$\textbf{,} $%
w_{1}^{\downarrow }>w_{2}^{\downarrow }>\dots >w_{n}^{\downarrow }$,
corresponds to dividing the entire simplex $\Delta _{n}$ into $n!$
asymmetric parts and selecting one of them, which we will denote as $\tilde{%
\Delta}_{n}$. If $\mathbf{W}$ is drawn from the uniform distribution on $%
\Delta _{n}$, the \emph{ordered weight vector} $\mathbf{W}^{\downarrow
}=(W_{1}^{\downarrow },W_{2}^{\downarrow },\dots ,W_{n}^{\downarrow })$ is
uniformly distributed on the \emph{asymmetric simplex} $\tilde{\Delta}_{n}$ with vertices $(1,0,\dots ,0)$%
, $\frac{1}{2}(1,1,0,\dots ,0)$, \dots , $\frac{1}{n}(1,1,\dots ,1)$, see Fig. \ref{fig:simpl3}.

\begin{figure}[h]
	\centering
	$\vcenter{\hbox{\includegraphics[width=0.48\linewidth]{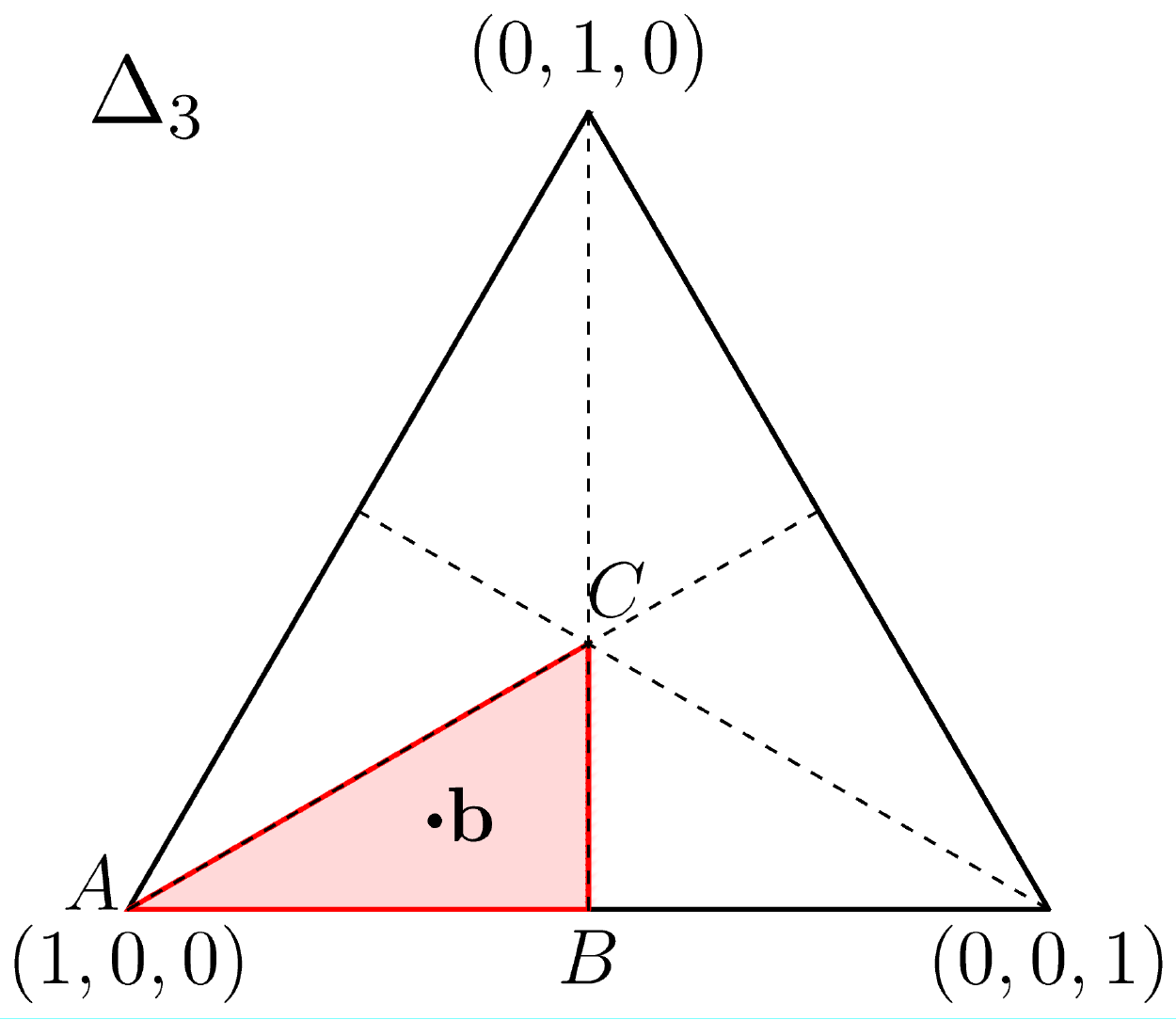}}}$
	$\vcenter{\hbox{\includegraphics[width=0.48\linewidth]{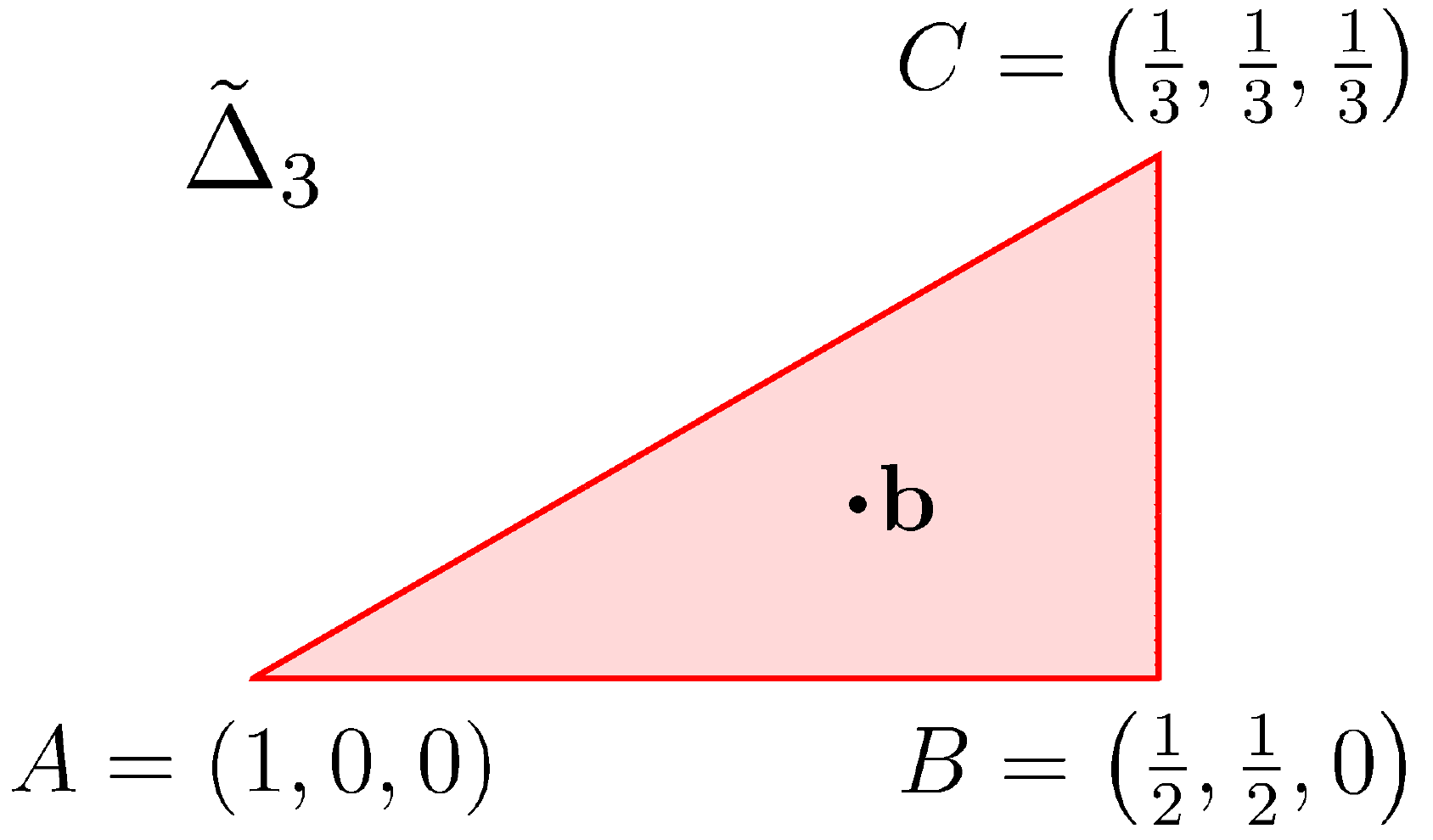}}}$
	\caption{The case of $n=3$:
		probability simplex $\Delta_3$ 
		as well as the asymmetric simplex ${\tilde \Delta_3}$ 
		with vertices $A,B,C$ and the barycenter $\mathbf{b}=(A+B+C)/3=(11,5,2)/18$.}
\label{fig:simpl3}
\end{figure}

The expected value of $\mathbf{W}^{\downarrow }$ coincides with the
barycenter $\mathbf{b}$ of $\tilde{\Delta}_{n}$. The $k$--th coordinate of
that barycenter, $b_{k}$, for $k=1,\dots ,n$, can be expressed by the sum of 
\emph{harmonic numbers} $H_{l}:=\sum_{j=1}^{l}1/j$, as follows:%
\begin{equation}
b_{k}=(H_{n}-H_{k-1})/n=\frac{1}{n}\sum_{j=k}^{n}\frac{1}{j}.
\end{equation}

Thus we obtain an explicit formula, valid for an arbitrary number of players 
$n$,\ for the expected voting weight of the $k$--th largest player:

\begin{proposition}
If $\mathbf{W\sim }\func{Unif}\left( \Delta _{n}\right) $, then for each\linebreak $k=1,\dots ,n$:
\begin{equation}
\mathbb{E}(W_{k}^{\downarrow })=b_{k}=\frac{1}{n}\sum_{j=k}^{n}\frac{1}{j}.
\end{equation}
\end{proposition}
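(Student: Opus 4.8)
The plan is to reduce the computation of $\mathbb{E}(W_{k}^{\downarrow })$ to the purely geometric problem of locating the barycenter of $\tilde{\Delta}_{n}$, and then to evaluate that barycenter as the average of the vertices of the asymmetric simplex. Two facts underlie this reduction. First, that the sorted vector $\mathbf{W}^{\downarrow }$ is uniformly distributed on $\tilde{\Delta}_{n}$: I would establish this by observing that the symmetric group $S_{n}$ acts on $\Delta _{n}$ by permuting coordinates, that each such permutation preserves the $(n-1)$--dimensional volume, and that the $n!$ permuted copies of $\tilde{\Delta}_{n}$ tile $\Delta _{n}$ up to a set of measure zero (the locus where two coordinates coincide). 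For any Borel set $A\subseteq \tilde{\Delta}_{n}$, the preimage under sorting is the disjoint union of these copies, so $\Pr (\mathbf{W}^{\downarrow }\in A)=\sum_{\sigma \in S_{n}}\Pr (\mathbf{W}\in \sigma A)=n!\,\mathrm{vol}(A)/\mathrm{vol}(\Delta _{n})$, which is exactly the uniform law on $\tilde{\Delta}_{n}$ since $\mathrm{vol}(\tilde{\Delta}_{n})=\mathrm{vol}(\Delta _{n})/n!$. Second, that the mean of a uniform distribution on any $(n-1)$--simplex is its barycenter: writing a uniform point in barycentric coordinates as $\mathbf{X}=\sum_{m}\Lambda _{m}\mathbf{v}_{m}$ with $(\Lambda _{1},\dots ,\Lambda _{n})$ uniform on the standard simplex, symmetry forces $\mathbb{E}(\Lambda _{m})=1/n$ for every $m$, whence $\mathbb{E}(\mathbf{X})=\tfrac{1}{n}\sum_{m}\mathbf{v}_{m}$.

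It then remains to identify the vertices of $\tilde{\Delta}_{n}=\{\mathbf{x}\in \Delta _{n}:x_{1}\geq x_{2}\geq \dots \geq x_{n}\}$ and average their coordinates. I would confirm that the extreme points are precisely $\mathbf{v}_{m}=\tfrac{1}{m}(\underbrace{1,\dots ,1}_{m},0,\dots ,0)$ for $m=1,\dots ,n$ by a constraint-counting argument: at $\mathbf{v}_{m}$ the active constraints are the equalities $x_{1}=\dots =x_{m}$ together with $x_{m+1}=\dots =x_{n}=0$, numbering $n-1$, which equals the dimension of the polytope, so each $\mathbf{v}_{m}$ is a vertex and there are no others. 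Since the $k$--th coordinate of $\mathbf{v}_{m}$ equals $1/m$ when $k\leq m$ and $0$ otherwise, averaging yields
\begin{equation}
b_{k}=\frac{1}{n}\sum_{m=1}^{n}(\mathbf{v}_{m})_{k}=\frac{1}{n}\sum_{m=k}^{n}\frac{1}{m}=\frac{1}{n}\left( H_{n}-H_{k-1}\right) ,
\end{equation}
which is the asserted value of $\mathbb{E}(W_{k}^{\downarrow })$.

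The routine parts are the closing summation and the barycentric-coordinate symmetry argument. The main obstacle is the first reduction: one must argue carefully that sorting pushes $\func{Unif}\left( \Delta _{n}\right) $ forward to $\func{Unif}(\tilde{\Delta}_{n})$, and this hinges both on the volume-preservation of coordinate permutations and on the fact that the $n!$ ordering regions overlap only on the measure-zero set where some coordinates coincide, so that no double counting or spurious boundary contribution arises. A secondary point deserving care is verifying that $\tilde{\Delta}_{n}$ is genuinely a simplex with exactly the $n$ listed vertices, and not a polytope with further extreme points; the constraint count above settles this, provided one also checks that the listed points indeed lie in $\tilde{\Delta}_{n}$ and are affinely independent.
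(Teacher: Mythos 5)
Your proposal is correct and follows essentially the same route as the paper: the paper also derives $\mathbb{E}(W_{k}^{\downarrow})$ as the barycenter of the asymmetric simplex $\tilde{\Delta}_{n}$ with vertices $(1,0,\dots,0),\ \tfrac{1}{2}(1,1,0,\dots,0),\dots,\tfrac{1}{n}(1,\dots,1)$, averaging those vertices to get $b_{k}=\tfrac{1}{n}(H_{n}-H_{k-1})$. Your write-up merely makes explicit the steps the paper leaves implicit (the measure-preserving tiling by the $S_{n}$ action, the vertex identification, and the fact that the mean of a uniform law on a simplex is its barycenter), which is a faithful completion rather than a different argument.
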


E.g., for $n=3$ the expected ordered random probability vector is $\mathbb{E}%
(\mathbf{W}^{\downarrow })=(11,5,2)/18$, while for $n=6$ one obtains $%
\mathbb{E}(\mathbf{W}^{\downarrow })=(147,87,57,37,22,10)/360$. Note that
for a large $n$ the harmonic numbers scale as $\ln n+\gamma $, where $\gamma 
$ is the Euler--Mascheroni constant, so the first coordinate scales as $\ln
n/n$, the median coordinate as $\ln 2/n$, and the smallest coordinate as $1/n^{2}$.

\subsection{Densities}

More generally, we obtain the following theorem, with proof given in
the Appendix:

\begin{theorem}
\label{thm:kPlayer}If $\mathbf{W\sim }\func{Unif}\left( \Delta _{n}\right) $%
, then $W_{k}^{\downarrow }$, $k=1,\dots ,n$, is distributed according to an
absolutely continuous distribution supported on $[1/n,1]$ for $k=1$ and on $%
[0,1/k]$ for $k>1$, with piecewise polynomial density $f_{n,k}:[0,1]%
\rightarrow \mathbb{R}$ given by:%
\begin{multline}
f_{n,k}\left( x\right) :=n(n-1)\dbinom{n-1}{k-1}\times \\
\sum\limits_{j=k}^{\min
\left( n,\left\lfloor 1/x\right\rfloor \right) }(-1)^{j-k}\dbinom{n-k}{j-k}%
\left( 1-jx\right) ^{n-2}.  \label{densKthPlayer}
\end{multline}
\end{theorem}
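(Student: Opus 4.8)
The plan is to reduce the problem to the classical theory of uniform spacings and then apply an inclusion--exclusion identity for exchangeable events. First I would invoke the standard representation of $\func{Unif}\left( \Delta _{n}\right) $: if $U_{1},\dots ,U_{n-1}$ are i.i.d.\ uniform on $[0,1]$ with order statistics $0=U_{(0)}<U_{(1)}<\dots <U_{(n-1)}<U_{(n)}=1$, then the vector of consecutive spacings $\left( U_{(1)}-U_{(0)},\dots ,U_{(n)}-U_{(n-1)}\right) $ has exactly the distribution of $\mathbf{W}$ (equivalently, $\mathbf{W}^{\downarrow }$ is uniform on $\tilde{\Delta}_{n}$, as already noted). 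Consequently $W_{k}^{\downarrow }$ is the $k$--th largest spacing, and its law is controlled by the counting variable $N(x):=\#\{i:W_{i}>x\}$, since $\{W_{k}^{\downarrow }>x\}=\{N(x)\geq k\}$.

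The key computation is the joint tail probability that a fixed collection of $j$ spacings simultaneously exceed $x$. Because the joint density of the spacings is constant, equal to $(n-1)!$, on the $(n-1)$--dimensional simplex, translating $j$ of the coordinates by $x$ merely rescales the remaining simplex, which yields $\mathbb{P}(W_{i_{1}}>x,\dots ,W_{i_{j}}>x)=(1-jx)_{+}^{n-1}$. By exchangeability the $j$--th Bonferroni sum is therefore $S_{j}=\binom{n}{j}(1-jx)_{+}^{n-1}$.

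Next I would apply the ``at least $k$'' form of inclusion--exclusion, $\mathbb{P}(N(x)\geq k)=\sum_{j=k}^{n}(-1)^{j-k}\binom{j-1}{k-1}S_{j}$, which follows by summing the exact-count formula together with the elementary identity $\sum_{i=0}^{r}(-1)^{i}\binom{j}{i}=(-1)^{r}\binom{j-1}{r}$. This gives the survival function
\[
\mathbb{P}(W_{k}^{\downarrow }>x)=\sum_{j=k}^{n}(-1)^{j-k}\binom{j-1}{k-1}\binom{n}{j}(1-jx)_{+}^{n-1}.
\]
Since each summand is continuous and piecewise polynomial in $x$, with breakpoints only at $x=1/j$, the distribution is absolutely continuous and I may differentiate term by term using $\frac{d}{dx}(1-jx)_{+}^{n-1}=-(n-1)j(1-jx)_{+}^{n-2}$. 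The last step is the purely algebraic simplification of the coefficient via $j\binom{j-1}{k-1}\binom{n}{j}=n\binom{n-1}{k-1}\binom{n-k}{j-k}$ (both sides equal $n!/((k-1)!\,(j-k)!\,(n-j)!)$), which converts the sum into the stated $f_{n,k}$; the positive part $(1-jx)_{+}$ is exactly what replaces the upper summation limit $n$ by $\min(n,\lfloor 1/x\rfloor )$. The support claims then follow from elementary constraints: the spacings sum to $1$, so the largest is always $\geq 1/n$, while if the $k$ largest all exceeded $1/k$ their sum would exceed $1$, forcing $W_{k}^{\downarrow }\leq 1/k$ for $k>1$.

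I expect the main obstacle to lie not in any single computation but in assembling the pieces cleanly: in particular, passing rigorously from the exchangeable events $\{W_{i}>x\}$ to the counting variable $N(x)$ via the correct ``at least $k$'' coefficient $\binom{j-1}{k-1}$, and tracking the positive parts so that both the support and the $\min(n,\lfloor 1/x\rfloor )$ cutoff emerge correctly. Once the binomial identity above is in hand, the remaining bookkeeping is routine.
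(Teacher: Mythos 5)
Your proof is correct, and it takes a genuinely different route from the one in the paper's Appendix. The paper works through the exponential representation of \citet{Jambunathan54}: it takes the known density of the order statistic $X_{k}^{\downarrow }$ of i.i.d.\ exponentials, uses the Markov property of order statistics to identify the conditional law of the total $\Psi $ given $X_{k}^{\downarrow }=y$ (a shifted gamma plus truncated exponentials), inverts the characteristic function of that conditional law via L\'{e}vy's formula, and finally applies Curtiss' ratio-density formula to $W_{k}^{\downarrow }=X_{k}^{\downarrow }/\Psi $. You instead stay entirely on the simplex: the spacings representation of $\func{Unif}(\Delta _{n})$, the exact joint tail $\mathbb{P}(W_{i_{1}}>x,\dots ,W_{i_{j}}>x)=(1-jx)_{+}^{n-1}$, the ``at least $k$'' Bonferroni identity with coefficient $\binom{j-1}{k-1}$, termwise differentiation, and the coefficient identity $j\binom{j-1}{k-1}\binom{n}{j}=n\binom{n-1}{k-1}\binom{n-k}{j-k}$ --- all of which check out, including the handling of the positive parts that produces the $\min (n,\lfloor 1/x\rfloor )$ cutoff and the support claims. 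This is essentially the classical spacings route that the paper's own Remark acknowledges as an alternative (citing Darling, Rao--Sobel, and Devroye) but deliberately avoids. The trade-off: your argument is more elementary and self-contained, and it delivers the closed-form survival function $\mathbb{P}(W_{k}^{\downarrow }>x)=\sum_{j=k}^{n}(-1)^{j-k}\binom{j-1}{k-1}\binom{n}{j}(1-jx)_{+}^{n-1}$ as a free by-product (which is exactly the kind of quantity, the $F_{k}$'s, used in Section 3 of the paper); the paper's heavier machinery --- exponential representation, conditioning, characteristic functions --- is chosen because it is the part that plausibly generalizes to arbitrary Dirichlet measures, where the uniform-spacings translation trick and the constant-density rescaling argument you rely on are no longer available.
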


\begin{figure}[h]
	\centering
	\includegraphics[width=\linewidth]{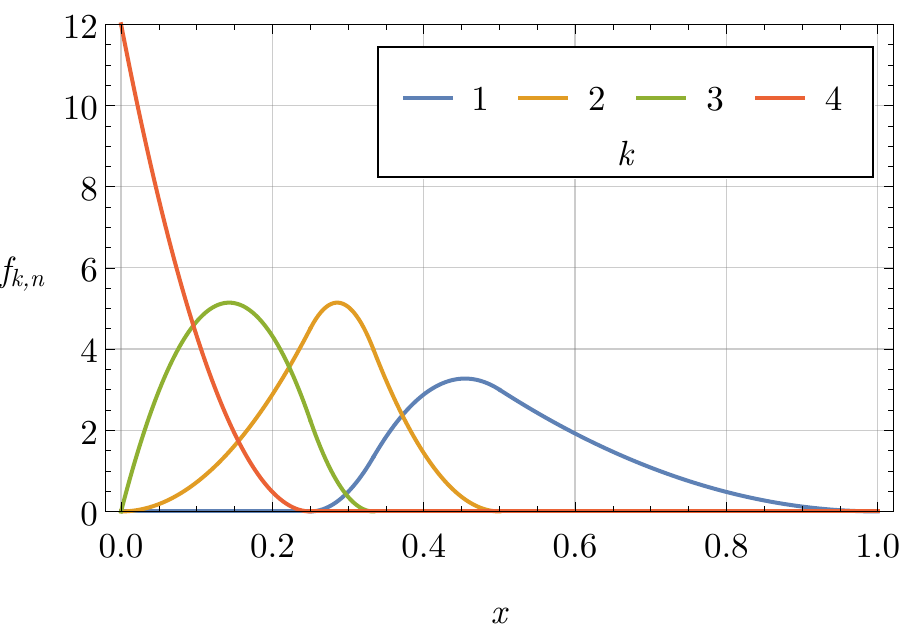}
	\caption{Densities of the distributions of the voting weight of the $k$-th largest of $4$ players for $k = 1, \dots, 4$.}
	\label{fig:kDensity}
\end{figure}%

\begin{remark}
The above result can also be obtained from results on the order statistics
of uniform spacings \citep{Darling53,RaoSobel80,Devroye81}. Nevertheless, we
believe that the approach described in the Appendix is more promising in the
context of a possible generalization to the family of Dirichlet
distributions.
\end{remark}

\begin{remark}
Elementary techniques of real analysis are sufficient to demonstrate that $%
f_{n,k}$ is smooth of class $\mathcal{C}^{n-3}$ for $n>2$.
\end{remark}

\begin{remark}
Theorem \ref{thm:kPlayer} extends an earlier result by \citet{QeadanEtAl12}%
, where, \textit{inter alia}, closed-form formulae are obtained for the
joint density of a sum and maximum of exponentially distributed i.i.d.
random variables. It suffices to note that the normalized vector of $n$
independent exponential random variables with mean $1$ is uniformly
distributed on $\Delta _{n}$ \citep{Jambunathan54}.
\end{remark}

\begin{proposition}
\label{propOrderStats}If $\mathbf{W\sim }\func{Unif}\left( \Delta
_{n}\right) $, then for $k=1,\dots ,n$,%
\begin{equation}
\left( W_{1}^{\downarrow },\dots ,W_{n}^{\downarrow }\right) \overset{d}{=}%
\bigg(\sum_{j=1}^{n}\frac{W_{j}}{j},\dots ,\sum_{j=n}^{n}\frac{W_{j}}{j}\bigg).  \label{ordSumWeights}
\end{equation}
\end{proposition}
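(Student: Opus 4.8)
The plan is to leverage the geometric description of $\mathbf{W}^{\downarrow}$ already recorded in the excerpt: since $\mathbf{W}\sim\func{Unif}(\Delta_n)$, the ordered vector $\mathbf{W}^{\downarrow}$ is uniformly distributed on the asymmetric simplex $\tilde{\Delta}_n$ whose vertices are $\mathbf{v}_l:=\tfrac1l(\underbrace{1,\dots,1}_{l},0,\dots,0)$ for $l=1,\dots,n$. The single idea behind the whole proof is that a uniform distribution on a non-degenerate simplex is nothing other than the pushforward of the uniform distribution in \emph{barycentric coordinates}, i.e., the law of $\sum_{l=1}^{n}U_l\mathbf{v}_l$ where $(U_1,\dots,U_n)\sim\func{Unif}(\Delta_n)$.

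First I would make this representation rigorous. Consider the linear map $T\colon\mathbb{R}^n\to\mathbb{R}^n$ sending the standard basis vector $\mathbf{e}_l$ to $\mathbf{v}_l$; its matrix is lower triangular with diagonal entries $1,\tfrac12,\dots,\tfrac1n$, hence $\det T=1/n!\neq0$ and $T$ is a linear isomorphism. Since each $\mathbf{v}_l$ lies in the hyperplane $\{\sum_i x_i=1\}$, $T$ restricts to an affine bijection of $\Delta_n$ onto $\tilde{\Delta}_n$ with constant Jacobian, so it carries the normalized Lebesgue measure on $\Delta_n$ exactly to the normalized Lebesgue measure on $\tilde{\Delta}_n$. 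Consequently $\mathbf{W}^{\downarrow}\overset{d}{=}T(U_1,\dots,U_n)=\sum_{l=1}^{n}U_l\mathbf{v}_l$ with $(U_1,\dots,U_n)\sim\func{Unif}(\Delta_n)$.

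It then remains to read off coordinates. The $k$-th coordinate of $\mathbf{v}_l$ equals $1/l$ when $l\geq k$ and $0$ otherwise, so the $k$-th coordinate of $\sum_l U_l\mathbf{v}_l$ is exactly $\sum_{l=k}^{n}U_l/l$. Relabelling the summation index and using that $(U_1,\dots,U_n)$ and $(W_1,\dots,W_n)$ have the same law (both $\func{Unif}(\Delta_n)$) yields the claimed distributional identity.

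The one step that requires genuine care — and hence the main obstacle — is the measure-theoretic justification that the uniform law on $\tilde{\Delta}_n$ is the $T$-image of the uniform law on $\Delta_n$. Because both simplices are $(n-1)$-dimensional objects embedded in $\mathbb{R}^n$, one must argue on the affine hull rather than naively invoking an $n\times n$ Jacobian; the triangular-determinant computation guarantees affine independence of the vertices, which is precisely what makes $T|_{\Delta_n}$ a bijection with a single constant volume-scaling factor, so that uniformity is preserved. Once that is in place, everything else is bookkeeping.
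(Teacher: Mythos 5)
Your proof is correct, but it takes a genuinely different route from the paper's. The paper works on the exponential side: it writes $W_j = X_j/\sum_{i=1}^n X_i$ for i.i.d.\ $\func{Exp}(1)$ variables (Jambunathan's representation), invokes R\'enyi's representation $\left( X_{1}^{\downarrow },\dots ,X_{n}^{\downarrow }\right) \overset{d}{=}\bigl(\sum_{j=1}^{n}X_{j}/j,\dots ,X_{n}/n\bigr)$, observes that both sides have the same coordinate sum $\sum_{i=1}^n X_i$, and divides through to land back on the simplex. You instead argue geometrically: starting from the fact, stated in Sec.~2.2 of the paper, that $\mathbf{W}^{\downarrow}$ is uniform on the asymmetric simplex $\tilde{\Delta}_n$, you represent the uniform law on $\tilde{\Delta}_n$ as the pushforward of $\func{Unif}(\Delta_n)$ under the vertex map $T:\mathbf{e}_l \mapsto \mathbf{v}_l$ and read off coordinates. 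Your justification of the pushforward step is sound: $T$ is a linear isomorphism (the matrix with columns $\mathbf{v}_l$ is in fact upper, not lower, triangular, but the determinant $1/n!$ and the conclusion are unaffected), it maps the hyperplane $\{\sum_i x_i = 1\}$ onto itself, and an affine bijection between $(n-1)$-dimensional sets in that hyperplane rescales $(n-1)$-dimensional Lebesgue measure by a constant, so it carries uniform to uniform. As for what each approach buys: yours is more elementary and self-contained (linear algebra plus a change of variables, no appeal to R\'enyi or Jambunathan), and it makes the barycenter formula $b_k=\frac{1}{n}\sum_{j=k}^{n}\frac{1}{j}$ of Sec.~2.2 an immediate corollary by taking coordinatewise expectations of $\sum_{l=k}^{n}U_l/l$; on the other hand it leans on the Sec.~2.2 identification of the vertices of $\tilde{\Delta}_n$, which the paper asserts without proof (standard, via the substitution $u_l = l(w_l - w_{l+1})$, but strictly speaking an extra ingredient). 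The paper's route is independent of that geometric claim, reuses exactly the exponential machinery that drives the Appendix proof of Theorem~\ref{thm:kPlayer}, and is the natural starting point for the authors' stated goal of generalizing to Dirichlet measures, where the flat geometric picture breaks down but the normalized-gamma representation survives.
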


\begin{proof}
By \citet{Jambunathan54} we can assume that for $j=1,\dots ,n$,%
\begin{equation}
W_{j}=\frac{X_{j}}{\sum_{i=1}^{n}X_{i}},  \label{jambunathan}
\end{equation}%
where $X_{1},\dots ,X_{n}\sim \func{Exp}\left( 1\right) $ are independent
random variables. Then by R\'{e}nyi representation formula \citep{Renyi53},%
\begin{equation}
\left( X_{1}^{\downarrow },\dots ,X_{n}^{\downarrow }\right) \overset{d}{=}%
\bigg(\sum_{j=1}^{n}\frac{X_{j}}{j},\dots ,\sum_{j=n}^{n}\frac{X_{j}}{j}\bigg).  \label{renyi}
\end{equation}%
Moreover,%
\begin{equation}
\sum_{k=1}^{n}X_{k}^{\downarrow
}=\sum_{k=1}^{n}X_{k}=\sum_{k=1}^{n}\sum_{j=k}^{n}\frac{X_{j}}{j}.
\label{renyiSums}
\end{equation}%
Thus, from (\ref{renyi}) and (\ref{renyiSums}),%
\begin{equation}\bigg(\frac{X_{1}^{\downarrow }}{\sum_{i=1}^{n}X_{i}},\dots ,\frac{%
X_{n}^{\downarrow }}{\sum_{i=1}^{n}X_{i}}\bigg)
\overset{d}{=}\bigg(
\frac{\sum_{j=1}^{n}\frac{X_{j}}{j}}{\sum_{i=1}^{n}X_{i}},\dots ,\frac{%
\sum_{j=n}^{n}\frac{X_{j}}{j}}{\sum_{i=1}^{n}X_{i}}\bigg),
\end{equation}%
and by (\ref{jambunathan})%
\begin{equation}
\bigg(\frac{\sum_{j=1}^{n}\frac{X_{j}}{j}}{\sum_{i=1}^{n}X_{i}},\dots ,\frac{%
\sum_{j=n}^{n}\frac{X_{j}}{j}}{\sum_{i=1}^{n}X_{i}}\bigg)
=
\bigg(\sum_{j=1}^{n}\frac{W_{j}}{j},\dots ,\sum_{j=n}^{n}\frac{W_{j}}{j}\bigg),
\end{equation}
as desired.
\end{proof}

\subsection{Product--moments of weights}

Product--moments of voting weights are interesting for a number of reasons.
Firstly, they appear in the definition of R\'{e}nyi entropy \citep{Renyi61}
of integer order $m$, where $m>1$, given by $-\ln
\sum\nolimits_{i=1}^{n}w_{i}^{m}$. Secondly, we use them in Sec. \ref%
{sec:coleman} to obtain the characteristic function of the distribution of
the total weight of a random coalition of players. Finally, the sum of
squared weights appears in the definitions of the
Herfindahl--Hirschman--Simpson index of diversity \citep
{Hirschman45,Simpson49,Herfindahl50}, $\sum\nolimits_{i=1}^{n}w_{i}^{2}$,
the Laakso--Taagepera index of the effective number of players \citep%
{LaaksoTaagepera79,TaageperaGrofman81}, $\left(
\sum\nolimits_{i=1}^{n}w_{i}^{2}\right) ^{-1}$, and the optimal quota
minimizing the Euclidean distance between weight and power vectors \citep%
{SlomczynskiZyczkowski06,SlomczynskiZyczkowski07}, $\frac{1}{2}\left(
1+(\sum_{i=1}^{n}w_{i}^{2})^{-1}\right) $.

We obtain a general theorem about the expected value of the product--moment
of voting weights:

\begin{theorem}
\label{thm:moments}If $\mathbf{W}\sim \func{Unif}\left( \Delta _{n}\right) $%
, then for every $\mathbf{m}:=\linebreak \left( m_{1},\dots ,m_{n}\right) \in \mathbb{N}%
^{n}$,%
\begin{equation}
\mathbb{E}%
\bigg(\prod_{j=1}^{n}W_{j}^{m_{j}}\bigg)
=\frac{\prod_{j=1}^{n}m_{j}!}{\left( n\right) _{\left\vert m\right\vert }},
\label{productMoments}
\end{equation}%
where $\left\vert \mathbf{m}\right\vert :=\sum_{j=1}^{n}m_{j}$ and $\left(
k\right) _{l}:=\prod_{j=0}^{l-1}(k+j)$.\medskip
\end{theorem}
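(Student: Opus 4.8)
The plan is to exploit the same exponential representation used in the proof of Proposition~\ref{propOrderStats}. By \citet{Jambunathan54} I may write $W_j = X_j/S$ for $j=1,\dots,n$, where $X_1,\dots,X_n \sim \func{Exp}(1)$ are independent and $S := \sum_{i=1}^n X_i$. Raising to powers and multiplying over $j$ gives
\[
\prod_{j=1}^{n} X_j^{m_j} = S^{\left\vert \mathbf{m}\right\vert}\,\prod_{j=1}^{n} W_j^{m_j},
\]
so the desired expectation will appear the moment I can split the left-hand side into a factor depending only on $\mathbf{W}=(W_1,\dots,W_n)$ and a factor depending only on $S$.

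The structural fact that makes this separation possible is that the normalized vector $\mathbf{W}$ is stochastically independent of the total $S$. This is the standard gamma--Dirichlet relationship: since each $X_j \sim \func{Exp}(1) = \func{Gamma}(1,1)$, the sum $S$ follows $\func{Gamma}(n,1)$, the vector $\mathbf{W}$ follows $\func{Dir}(1,\dots,1)=\func{Unif}(\Delta_n)$, and the two are independent. Granting this, independence yields
\[
\mathbb{E}\bigg(\prod_{j=1}^{n} X_j^{m_j}\bigg) = \mathbb{E}\big(S^{\left\vert \mathbf{m}\right\vert}\big)\,\mathbb{E}\bigg(\prod_{j=1}^{n} W_j^{m_j}\bigg).
\]

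It then remains to evaluate two elementary moments. On the left, independence of the $X_j$ together with $\mathbb{E}(X_j^{m_j}) = \int_0^\infty x^{m_j} e^{-x}\,dx = m_j!$ gives $\mathbb{E}(\prod_j X_j^{m_j}) = \prod_j m_j!$. For $S \sim \func{Gamma}(n,1)$ the $\left\vert \mathbf{m}\right\vert$-th moment is $\mathbb{E}(S^{\left\vert \mathbf{m}\right\vert}) = \Gamma(n+\left\vert \mathbf{m}\right\vert)/\Gamma(n) = (n)_{\left\vert \mathbf{m}\right\vert}$, which is exactly the rising factorial appearing in the statement. Dividing produces (\ref{productMoments}).

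The only step deserving real care—and hence the main obstacle—is the independence of $\mathbf{W}$ and $S$; everything else is a one-line moment computation. If one prefers to avoid invoking the gamma--Dirichlet theorem, the same identity follows directly by integrating against the uniform density $(n-1)!$ on $\Delta_n$: parametrizing by $(w_1,\dots,w_{n-1})$ with $w_n = 1-\sum_{i<n} w_i$ and recognizing the resulting integral as a Dirichlet integral with parameters $a_i = m_i+1$ gives $\prod_i \Gamma(m_i+1)/\Gamma(n+\left\vert \mathbf{m}\right\vert)$, and multiplying by $(n-1)!$ reproduces $\prod_j m_j!/(n)_{\left\vert \mathbf{m}\right\vert}$ after using $(n)_{\left\vert \mathbf{m}\right\vert} = (n+\left\vert \mathbf{m}\right\vert-1)!/(n-1)!$.
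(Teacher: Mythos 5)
Your proof is correct, but it takes a genuinely different route from the paper. The paper's own argument is generating-functional: it substitutes into a polytope-integration identity of \citet[Corollary 14]{BaldoniEtAl11}, expands each factor $\left(1-t_{j}\right)^{-1}$ into a Taylor series, and reads off the moments by uniqueness of the Taylor expansion — so the analytic content is outsourced to a cited combinatorial formula for integrals of monomials over the simplex. You instead push the exponential representation $W_{j}=X_{j}/S$ of \citet{Jambunathan54} one step further, using the classical gamma--Dirichlet fact that the normalized vector $\mathbf{W}$ is independent of the sum $S\sim\func{Gamma}\left(n,1\right)$, after which everything reduces to the one-line moments $\mathbb{E}\left(X_{j}^{m_{j}}\right)=m_{j}!$ and $\mathbb{E}\left(S^{\left\vert\mathbf{m}\right\vert}\right)=\Gamma\left(n+\left\vert\mathbf{m}\right\vert\right)/\Gamma\left(n\right)=\left(n\right)_{\left\vert\mathbf{m}\right\vert}$. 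This is arguably the more natural proof in the context of this paper: the same representation already drives Proposition \ref{propOrderStats} and the Appendix proof of Theorem \ref{thm:kPlayer}, and your argument generalizes verbatim to Dirichlet measures (replace $\func{Exp}\left(1\right)$ by $\func{Gamma}\left(\alpha_{j},1\right)$), which the paper explicitly names as future work; the direct Dirichlet-integral computation you sketch at the end is also a complete and fully elementary fallback that makes the independence theorem unnecessary. The one step you should not wave at is precisely that independence: state it as the gamma--Dirichlet theorem with a reference, or derive it by the change of variables $x_{j}=w_{j}s$, whose Jacobian factorizes the joint density into a $\func{Dir}\left(1,\dots,1\right)$ part and a $\func{Gamma}\left(n,1\right)$ part — with that made explicit, your proof is airtight.
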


\begin{proof}
Substituting $d=n-1$, $D=n$, and $%
l_{j}=x_{j}$ ($j=1,\dots ,n$) for $\left( t_{1},\dots
,t_{n}\right) \in \left( 0,1\right) ^{n}$ in \citet[Corollary 14]{BaldoniEtAl11} we obtain%
\begin{equation}
\sum_{\mathbf{m}\in \mathbb{N}^{n}}\prod_{j=1}^{n}t_{j}^{m_{j}}\frac{\left(
n\right) _{\left\vert \mathbf{m}\right\vert }}{\prod_{j=1}^{n}\left(
m_{j}\right) !}\int_{\Delta _{n}}\prod_{j=1}^{n}x_{j}^{m_{j}}\,d\mathbf{x}=%
\frac{1}{\prod_{j=1}^{n}\left( 1-t_{j}\right) }.
\end{equation}%
Expanding $\left( 1-t_{j}\right) ^{-1}$ into Taylor series, we get%
\begin{equation}
\sum_{\mathbf{m}\in \mathbb{N}^{n}}\prod_{j=1}^{n}t_{j}^{m_{j}}\frac{\left(
n\right) _{\left\vert \mathbf{m}\right\vert }}{\prod_{j=1}^{n}\left(
m_{j}\right) !}\,\mathbb{E}%
\bigg(\prod_{j=1}^{n}W_{j}^{m_{j}}\bigg)
=\sum_{\mathbf{m}\in \mathbb{N}^{n}}\prod_{j=1}^{n}t_{j}^{m_{j}}.
\end{equation}%
Then the assertion follows from the uniqueness of Taylor expansion.\medskip
\end{proof}

From this result, we obtain the following corollaries:

\begin{corollary}
If a random vector $\mathbf{W}\sim \func{Unif}\left( \Delta _{n}\right) $,
then for every $m\in \mathbb{N}_{+}$,%
\begin{equation}
\mathbb{E}\bigg(\sum_{j=1}^{n}W_{j}^{m}\bigg)
=\frac{m!}{\left( n+1\right) _{m-1}}.
\end{equation}
\end{corollary}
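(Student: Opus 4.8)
The plan is to combine linearity of expectation with the permutation symmetry of the uniform measure to reduce the claim to the single-variable moment $\mathbb{E}(W_1^m)$, which is itself a special case of Theorem~\ref{thm:moments}. First I would write, by linearity, $\mathbb{E}(\sum_{j=1}^n W_j^m) = \sum_{j=1}^n \mathbb{E}(W_j^m)$. Because $\func{Unif}(\Delta_n)$ is invariant under permutations of the coordinates, every $W_j$ has the same marginal law, so each summand equals $\mathbb{E}(W_1^m)$ and the whole sum equals $n\,\mathbb{E}(W_1^m)$.

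Next I would read off $\mathbb{E}(W_1^m)$ from Theorem~\ref{thm:moments} by choosing the exponent vector $\mathbf{m} = (m,0,\dots,0)$. Then $\left\vert\mathbf{m}\right\vert = m$, the numerator $\prod_{j=1}^n m_j! = m!\,(0!)^{n-1} = m!$, and the denominator is $(n)_{\left\vert\mathbf{m}\right\vert} = (n)_m$, so $\mathbb{E}(W_1^m) = m!/(n)_m$. Substituting this back gives $\mathbb{E}(\sum_{j=1}^n W_j^m) = n\,m!/(n)_m$.

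It then remains only to reconcile this with the stated right-hand side $m!/(n+1)_{m-1}$, which is purely a matter of manipulating the Pochhammer symbol. Writing $(n)_m = \prod_{j=0}^{m-1}(n+j) = n\prod_{j=1}^{m-1}(n+j) = n\,(n+1)_{m-1}$, the leading factor $n$ cancels against the $n$ supplied by the symmetry argument, and the claimed formula follows.

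Since everything is a direct consequence of the product-moment formula already established, I do not expect any genuine obstacle. The only step requiring the slightest attention is the last one: verifying that the factor of $n$ coming from the $n$ equal summands is exactly the head that splits off from $(n)_m$ to leave $(n+1)_{m-1}$. This is a one-line telescoping identity, so the corollary is essentially immediate.
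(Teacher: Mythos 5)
Your proof is correct and is essentially the paper's intended argument: the paper presents this corollary as an immediate consequence of Theorem~\ref{thm:moments}, and your combination of linearity, exchangeability of the coordinates under $\func{Unif}(\Delta_n)$, the specialization $\mathbf{m}=(m,0,\dots,0)$, and the identity $(n)_m = n\,(n+1)_{m-1}$ is exactly the omitted computation. No gaps; the edge case $m=1$ also checks out since $(n+1)_0=1$.
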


\begin{corollary}
\label{thm:ssq}If a random vector $\mathbf{W}\sim \func{Unif}\left( \Delta
_{n}\right) $, then:%
\begin{equation}
\mathbb{E}\bigg(\sum_{j=1}^{n}W_{j}^{2}\bigg)=\frac{2}{n+1},
\end{equation}%
and%
\begin{equation}
\func{Var}\bigg(\sum_{j=1}^{n}W_{j}^{2}\bigg)=
\frac{4\left( n-1\right) }{\left( n+1\right) ^{2}\left( n+2\right) \left(
n+3\right) }.
\end{equation}
\end{corollary}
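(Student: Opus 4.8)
The plan is to derive both formulae directly from Theorem~\ref{thm:moments} and the preceding corollary. The expectation is immediate: taking $m=2$ in the corollary for $\mathbb{E}(\sum_{j}W_{j}^{m})=m!/(n+1)_{m-1}$ gives $\mathbb{E}(\sum_{j}W_{j}^{2})=2/(n+1)_{1}=2/(n+1)$, which is the first assertion.

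For the variance I would write $\func{Var}(S)=\mathbb{E}(S^{2})-(\mathbb{E}S)^{2}$ with $S:=\sum_{j}W_{j}^{2}$, so that the only new quantity to compute is $\mathbb{E}(S^{2})$. Expanding the square and separating the diagonal from the off-diagonal terms,
\[
S^{2}=\sum_{j=1}^{n}W_{j}^{4}+\sum_{i\neq j}W_{i}^{2}W_{j}^{2}.
\]
The diagonal sum is handled by the preceding corollary with $m=4$, giving $\mathbb{E}(\sum_{j}W_{j}^{4})=24/[(n+1)(n+2)(n+3)]$. For each off-diagonal product I would apply Theorem~\ref{thm:moments} to the multi-index $\mathbf{m}$ with $m_{i}=m_{j}=2$ and all remaining entries zero; then $\left\vert\mathbf{m}\right\vert=4$ and $\prod_{l}m_{l}!=4$, so $\mathbb{E}(W_{i}^{2}W_{j}^{2})=4/(n)_{4}=4/[n(n+1)(n+2)(n+3)]$. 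Since there are $n(n-1)$ ordered pairs with $i\neq j$, the off-diagonal contribution equals $4(n-1)/[(n+1)(n+2)(n+3)]$.

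Adding the two contributions yields $\mathbb{E}(S^{2})=4(n+5)/[(n+1)(n+2)(n+3)]$, and subtracting $(\mathbb{E}S)^{2}=4/(n+1)^{2}$ over the common denominator $(n+1)^{2}(n+2)(n+3)$ leaves the numerator $4[(n+5)(n+1)-(n+2)(n+3)]=4(n-1)$, which gives the stated variance. No step presents a genuine obstacle; the only points requiring care are the correct count of $n(n-1)$ off-diagonal pairs and the bookkeeping when combining the two fractions over a common denominator.
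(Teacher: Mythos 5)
Your proof is correct and follows exactly the route the paper intends: the corollary is stated as an immediate consequence of Theorem~\ref{thm:moments}, obtained by expanding $\bigl(\sum_j W_j^2\bigr)^2$ into diagonal and off-diagonal terms and applying the product--moment formula, which is precisely your computation. All the arithmetic checks out, including the count of $n(n-1)$ ordered pairs and the final simplification $4[(n+5)(n+1)-(n+2)(n+3)]=4(n-1)$.
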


\section{Voting Power of the $k$--th Largest Player}

\subsection{Definitions}

The notion of a \emph{power index} serves to characterize the \textit{a
priori} voting power of a player in a weighted voting game by measuring the
probability that their vote will be decisive in a hypothetical ballot, i.e.,
the winning coalition would fail to satisfy the qualified majority condition
if this player were to change their vote. In the classical approach by
\citet{Penrose46,Penrose52} and \citet{Banzhaf64}, it is
assumed that all potential coalitions of players are equiprobable.

Let $\omega :=\left\vert \mathcal{W}\right\vert $ be the total number of
winning coalitions, and for $i=1,\dots ,n$, let $\omega _{i}:=\left\vert
\left\{ Q\in \mathcal{W}:i\in Q\right\} \right\vert $ be the number of
winning coalitions that include the $i$--th player.

\begin{definition}
The absolute (non-normalized) Penrose--Banzhaf index $\psi _{i}$ of the $i$%
--th player, where $i=1,\ldots ,n$, is the probability that the $i$--th
player is decisive, i.e., 
\begin{equation}
\psi _{i}:=\frac{\omega _{i}-(\omega -\omega _{i})}{2^{n-1}}=\frac{2\omega
_{i}-\omega }{2^{n-1}}.  \label{PB1}
\end{equation}
\end{definition}

To compare these indices for games with different numbers of players, it is
convenient to define the \textit{normalized Penrose--Banzhaf index}.

\begin{definition}
The normalized Penrose--Banzhaf index $\beta _{i}$ of the $i$--th player,
where $i=1,\ldots ,n$, is 
\begin{equation}
\beta _{i}:=\frac{\psi _{i}}{\sum\nolimits_{j=1}^{n}\psi _{j}}.  \label{PB2}
\end{equation}
\end{definition}

The absolute Penrose--Banzhaf index, unlike the normalized one, has a clear
probabilistic interpretation; however, for the latter the vector of indices
always lies on $\Delta _{n}$.

\subsection{Analytical results for very small values of $n\label%
{sec:analyticalSmallN}$}

For any $G,J\in \mathcal{G}_{n}$, let $I:V\left( G\right) \rightarrow V\left(
J\right) $ be an isomorphism mapping the $k$--th largest player in $G$ to
the $k$--th largest player in $J$ (assuming linear orderings of players in
both games), and let $\sim $ be an equivalence relation on $\mathcal{G}_{n}$
such that $G\sim J$ if and only if $\mathcal{W}\left( G\right) =\mathcal{W}%
\left( J\right) $ up to isomorphism $I$. For small values of $n$, the
elements of the quotient set $\mathcal{G}_{n}/\sim $ can be easily
enumerated -- see \citet{MurogaEtAl62,Winder65,MurogaEtAl70} and more
generally \citet{KirschLangner10,BarthelemyEtAl11,Kurz12,Kurz18c}. Their
number increases rapidly with $n$: there are $2$ elements of $\mathcal{G}%
_{n}/\sim $ for $n=2$ players, $5$ for $3$ players, $14$ for $4$
players, $62$ for $5$ players, $566$ for $6$ players, and $11971$ for $7$
players.

For a fixed $q\in (\frac{1}{2},1]$ and for each $\chi \in \mathcal{G}%
_{n}/\sim $ there exists a set $L_{\chi }^{q}\subset \Delta _{n}$\ such that
for any point within $L_{\chi }^{q}$ the ordered power index vector $(\beta
_{1}^{\downarrow },...,\beta _{n}^{\downarrow })$ equals $(\beta _{1}^{\chi
},...,\beta _{n}^{\chi })$. Note that the volume of $L_{\chi }^{q}$ depends
on the quota $q$. The expected voting power of the $k$--th largest player
equals:%
\begin{equation}
\mathbb{E}\left( \beta _{k}^{\downarrow }\right) =\sum_{\chi \in \mathcal{G}%
_{n}/\sim }\beta _{k}^{\chi }\,\lambda \left( L_{\chi }^{q}\right) ,
\label{eq:expectBetaI}
\end{equation}%
where by $\lambda $ we denote the Lebesgue measure on $\Delta _{n}$.

The case of $n=2$ is straightforward, as there are only two classes of games
-- the unanimity and the dictatorship of the largest player. Ordered power
index vectors $(\beta _{1}^{\downarrow },\beta _{2}^{\downarrow })$ for
those classes are equal to $(\frac{1}{2},\frac{1}{2})$ and $(1,0)$,
respectively. Thus, we obtain:
\begin{subequations}
\label{eq:expectBetaOrd}
\begin{equation}
\mathbb{E}\left( \beta _{1}^{\downarrow }\right) =2 \left(\frac{1}{2} \lambda \Big(\big(\frac{1}{2}, q\big)\Big) +
\lambda\Big(\big(q, 1\big)\Big)\right)=\frac{3}{2}-q,
\end{equation}\begin{equation}
\mathbb{E}\left( \beta _{2}^{\downarrow }\right) =\lambda\Big(\big(\frac{1}{2}, q\big)\Big)=q-\frac{1}{2}.
\end{equation}\end{subequations}%

Now let us consider the simplest non-trivial case -- that of $n=3$. There
are five elements of $\mathcal{G}_{n}/\sim $ to consider:\medskip

\begin{adjustbox}{max width=\columnwidth}
\renewcommand*{\arraystretch}{1.2}
\begin{tabular}{|c|c|c|c|}
\hline
$\beta^{\chi }$ & condition ($\chi $) & probability ($\lambda \left( L_{\chi }^{q}\right) $)
\\ \hline
$(\frac{1}{3},\frac{1}{3},\frac{1}{3})$ & $q>w_{1}^{\downarrow }+w_{2}^{\downarrow }$ & $%
1-F_{3}(1-q)$ \\ \hline
$(\frac{1}{2},\frac{1}{2},0)$ & $w_{1}^{\downarrow }+w_{2}^{\downarrow
}>q>w_{1}^{\downarrow }+w_{3}^{\downarrow }$ & $F_{3}(1-q)-F_{2}(1-q)$ \\ 
\hline
$(\frac{3}{5},\frac{1}{5},\frac{1}{5})$ & $%
\begin{array}{c}
(w_{1}^{\downarrow }+w_{3}^{\downarrow }>q>w_{2}^{\downarrow
}+w_{3}^{\downarrow }) \\ 
\wedge \,(w_{1}^{\downarrow }+w_{3}^{\downarrow }>q>w_{1}^{\downarrow })%
\end{array}%
$ & $\!\!\!%
\begin{array}{c}
(1-F_{1}(1-q))F_{1}(q) \\ 
-F_{1}(1-q)(1-F_{1}(q)) \\ 
-1+F_{2}(1-q)%
\end{array}%
\!\!\!$ \\ \hline
$(\frac{1}{3},\frac{1}{3},\frac{1}{3})$ & $\begin{array}{c}(w_{2}^{\downarrow }+w_{3}^{\downarrow
}>q) \\ \wedge \,(w_{1}^{\downarrow }<1/2)\end{array}$ & $F_{1}(1-q)$ \\ \hline
$(1,0,0)$ & $(w_{1}^{\downarrow }>q)\wedge (w_{1}^{\downarrow
}>1/2)$ & $1-F_{1}(q)$ \\ \hline
\end{tabular}
\end{adjustbox}

\begin{flushleft}
where by $F_{k}$ we denote the cumulative distribution of the $k$--th
largest player's weight, $k=1,2,3$.
\end{flushleft}

From the above and (\ref{densKthPlayer}), we obtain, see Fig. \ref{fig:powerDistProbs}:\medskip

\begin{adjustbox}{max width=\columnwidth}
\renewcommand*{\arraystretch}{1.2}
\begin{tabular}{c|c|c|}
\cline{2-3}
 & \multicolumn{2}{|c|}{$\qquad \lambda \left( L_{\chi }^{q}\right) $%
} \\ \hline
\multicolumn{1}{|c|}{$\beta^{\chi }$} & $\quad \qquad q\leq 2/3\qquad \quad $ & $\quad \qquad q\geq
2/3\qquad \quad $ \\ \hline
\multicolumn{1}{|c|}{$(\frac{1}{3},\frac{1}{3},\frac{1}{3})$} & $0$ & $9q^{2}-12q+4$ \\ \hline
\multicolumn{1}{|c|}{$(\frac{1}{2},\frac{1}{2},0)$} & $12q^{2}-12q+3$ & $-15q^{2}+24q-9$ \\ 
\hline
\multicolumn{1}{|c|}{$(\frac{3}{5},\frac{1}{5},\frac{1}{5})$} & $-24q^{2}+30q-9$ & $3q^{2}-6q+3$ \\ 
\hline
\multicolumn{1}{|c|}{$(\frac{1}{3},\frac{1}{3},\frac{1}{3})$} & $9q^{2}-12q+4$ & $0$ \\ 
\hline
\multicolumn{1}{|c|}{$(1,0,0)$} & $3q^{2}-6q+3$ & $3q^{2}-6q+3$ \\ \hline
\end{tabular}
\end{adjustbox}
\medskip

At this point, from (\ref{eq:expectBetaI}) we get:
\begin{subequations} \label{eq:expectBeta123}
	\begin{alignat}{4}
		\mathbb{E}\left(\beta_1^\downarrow\right)&=\begin{cases}
			\frac{12}{5}(q-q^2)+\frac{1}{30}, & q \leq 2/3, \\
			\frac{1}{10}(16-16q+3q^2)+\frac{1}{30}, & q \geq 2/3,
		\end{cases} \\
		\mathbb{E}\left(\beta_2^\downarrow\right)&=\begin{cases}
			\frac{21}{5}q^2-4q+1+\frac{1}{30}, & q \leq 2/3, \\
			\frac{1}{10}(68q-39q^2+26), & q \geq 2/3,
		\end{cases} \\
		\mathbb{E}\left(\beta_3^\downarrow\right)&=\begin{cases}
			-\frac{9}{5}q^2+2q-\frac{1}{2}+\frac{1}{30}, & q \leq 2/3, \\
			\frac{2}{5}(9q^2-13q)+2+\frac{1}{30}, & q \geq 2/3.
		\end{cases}
	\end{alignat}
\end{subequations}%

\begin{figure}[h]
	\centering
	\includegraphics[width=\linewidth]{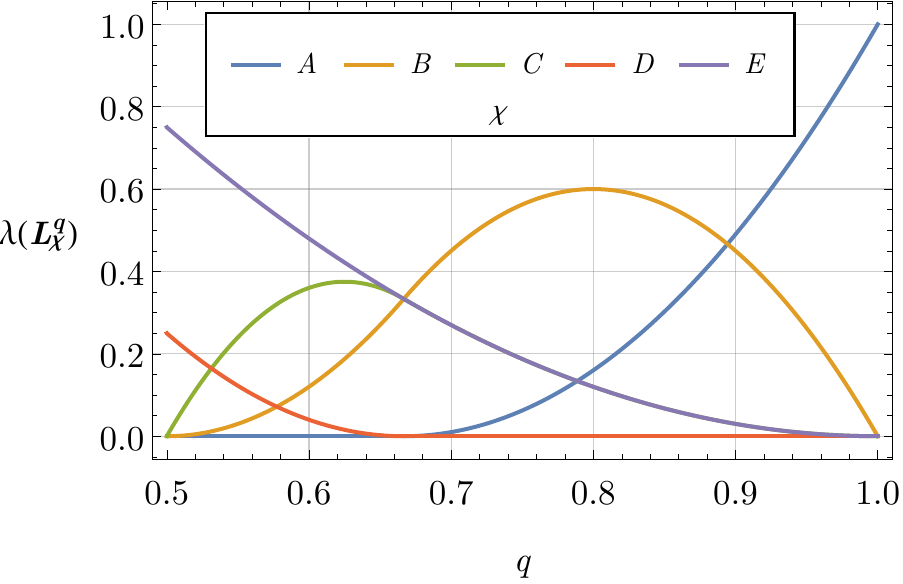}
	\caption{Probabilities of the five classes of games $\chi \in \mathcal{G}_{3}/\sim $
		as a function of the quota $q$. Classes $A$, $B$, $C$, $D$, and $E$ correspond,
		respectively, to weight vectors $\beta^{A} := (\frac{1}{3},\frac{1}{3},\frac{1}{3})$,
		$\beta^{B} := (\frac{1}{2},\frac{1}{2},0)$,
		$\beta^{C} := (\frac{3}{5},\frac{1}{5},\frac{1}{5})$,
		$\beta^{D} := (\frac{1}{3},\frac{1}{3},\frac{1}{3})$,
		and $\beta^{E} := (1,0,0)$.
	}
	\label{fig:powerDistProbs}
\end{figure}%

\subsection{Numerical results for small values of $n\label{sec:numericalBan}$%
}

As mentioned in Sec. 1, if a player's voting weight is fixed, the dependence
of the voting power on the quota $q\in (\frac{1}{2},1]$ seems to be highly
erratic. This is illustrated by Figure \ref{fig:sixtq}.

\begin{figure}[t]
	\centering
	$\vcenter{\hbox{\includegraphics[width=\linewidth]{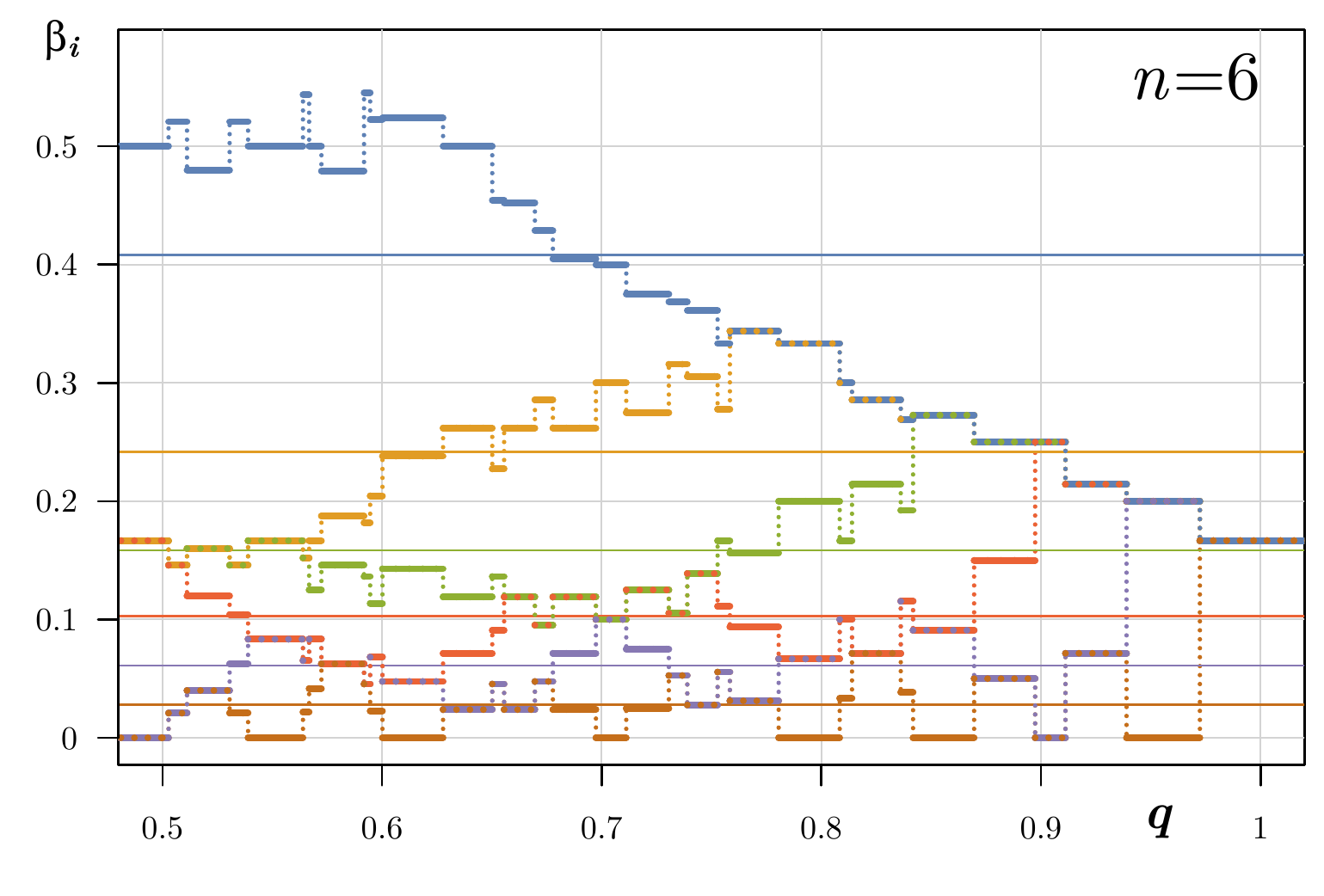}}}$
	$\vcenter{\hbox{\includegraphics[width=0.6\linewidth]{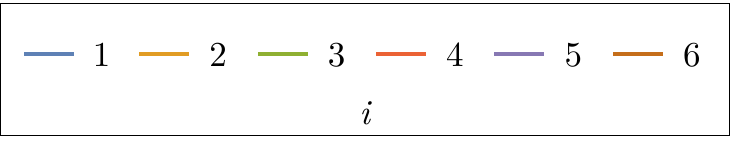}}}$
	\caption{Normalized Penrose--Banzhaf power indices $\beta _{1},\dots ,\beta _{6}$
		in a six-player weighted voting game with weights fixed at the barycenter of the asymmetric simplex, 
		$\QTR{bf}{b}=(147,87,57,37,22,10)/360$, as functions of the quota $q$.
		Horizontal lines represent the voting weights of each player. An earlier version of this figure appeared in \citet[p.~282]{RzazewskiEtAl14},
		cf. Fig. \ref{fig:sixrand}.}
	\label{fig:sixtq}
\end{figure}%

On Fig. \ref{fig:sixrand} we plot numerical estimates of $\mathbb{E(}\beta
_{k}^{\downarrow })$ and $\mathbb{E(}\psi _{k}^{\downarrow })$\ as functions
of $q$, obtained by Monte Carlo samplings of $2^{16}$ random vectors of
length $n=3,6,9$. Their examination reveals certain general regularities.

\begin{figure*}
	\centering
	\includegraphics[width=0.48\linewidth]{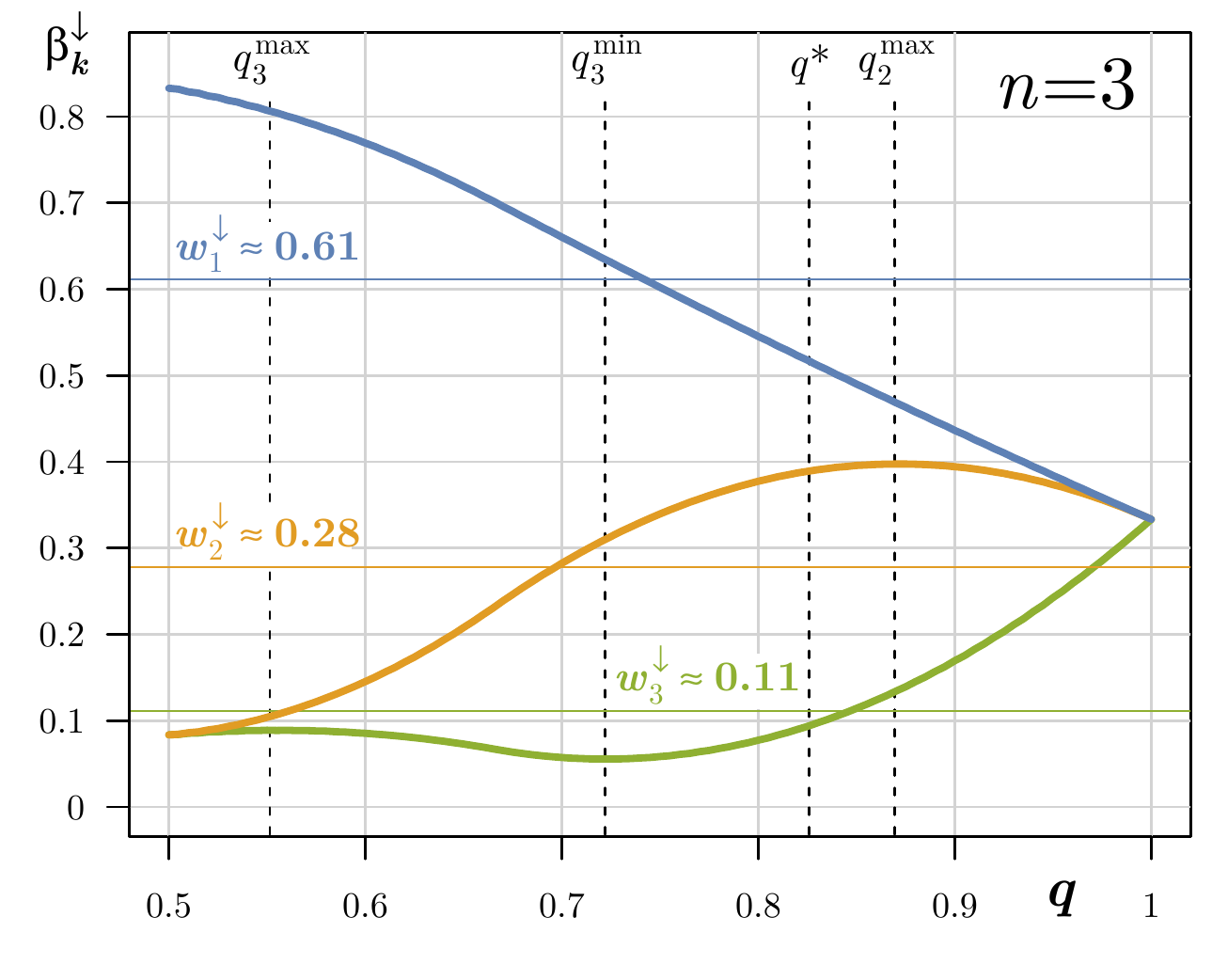}
	\includegraphics[width=0.48\linewidth]{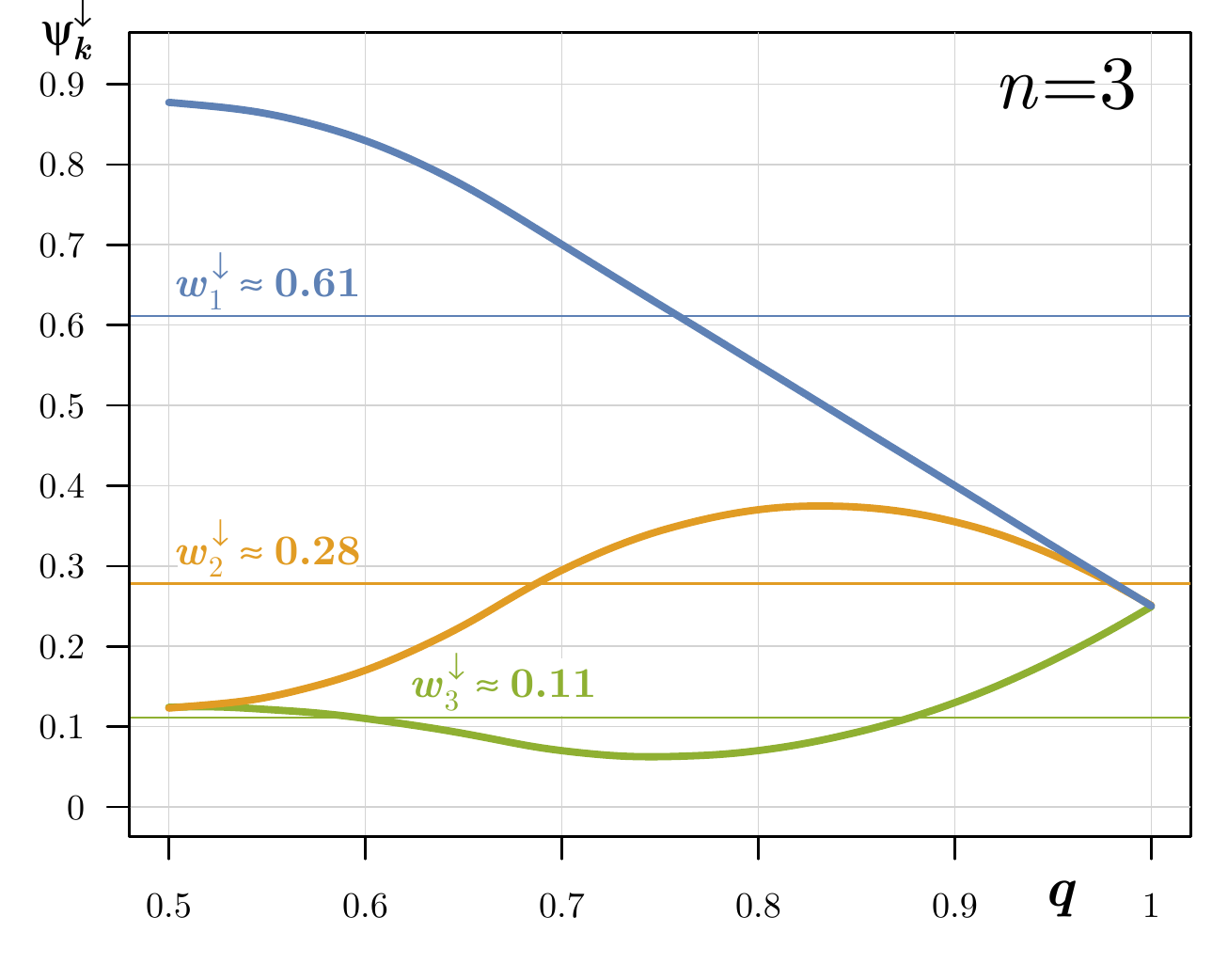}
	\includegraphics[width=0.48\linewidth]{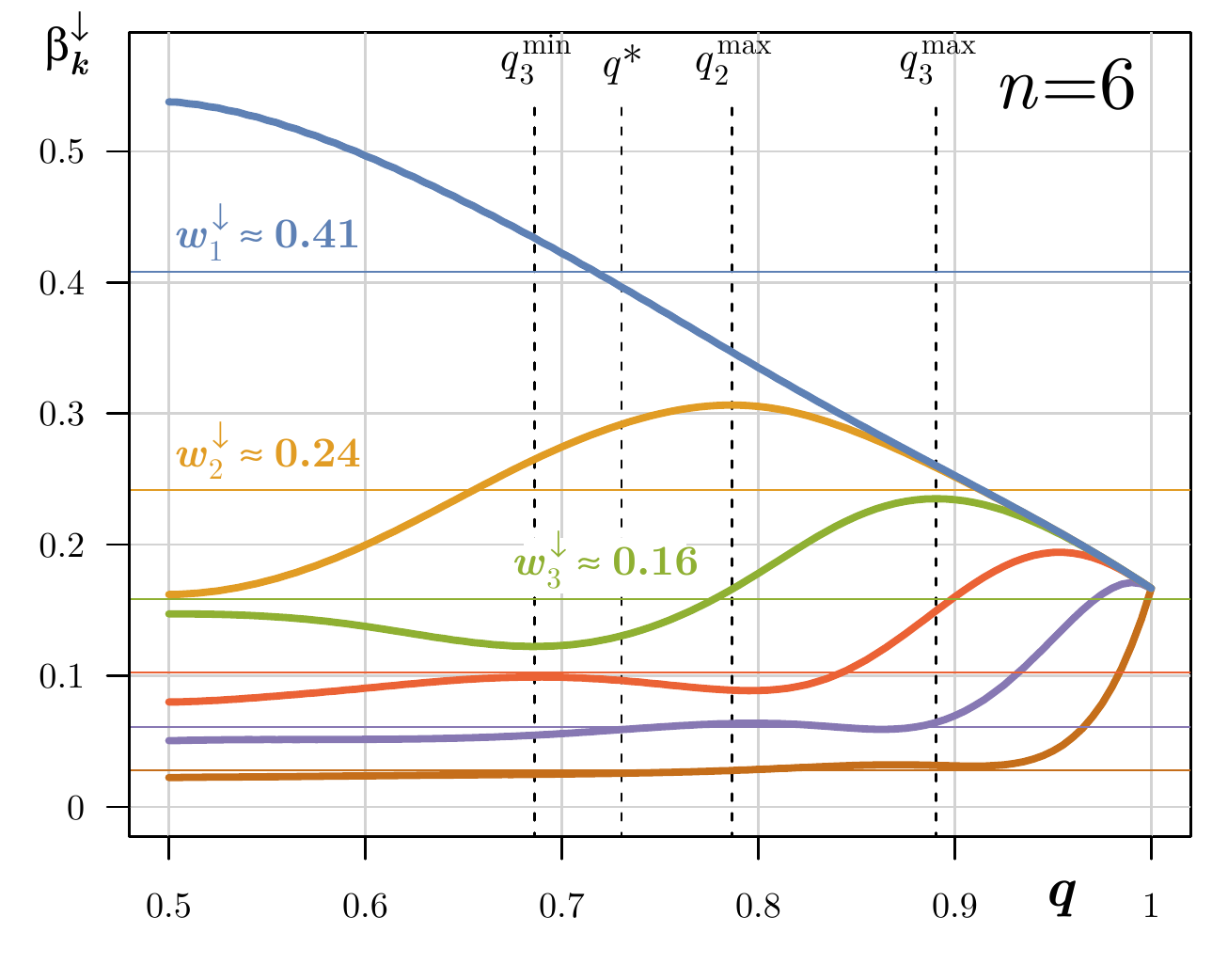}
	\includegraphics[width=0.48\linewidth]{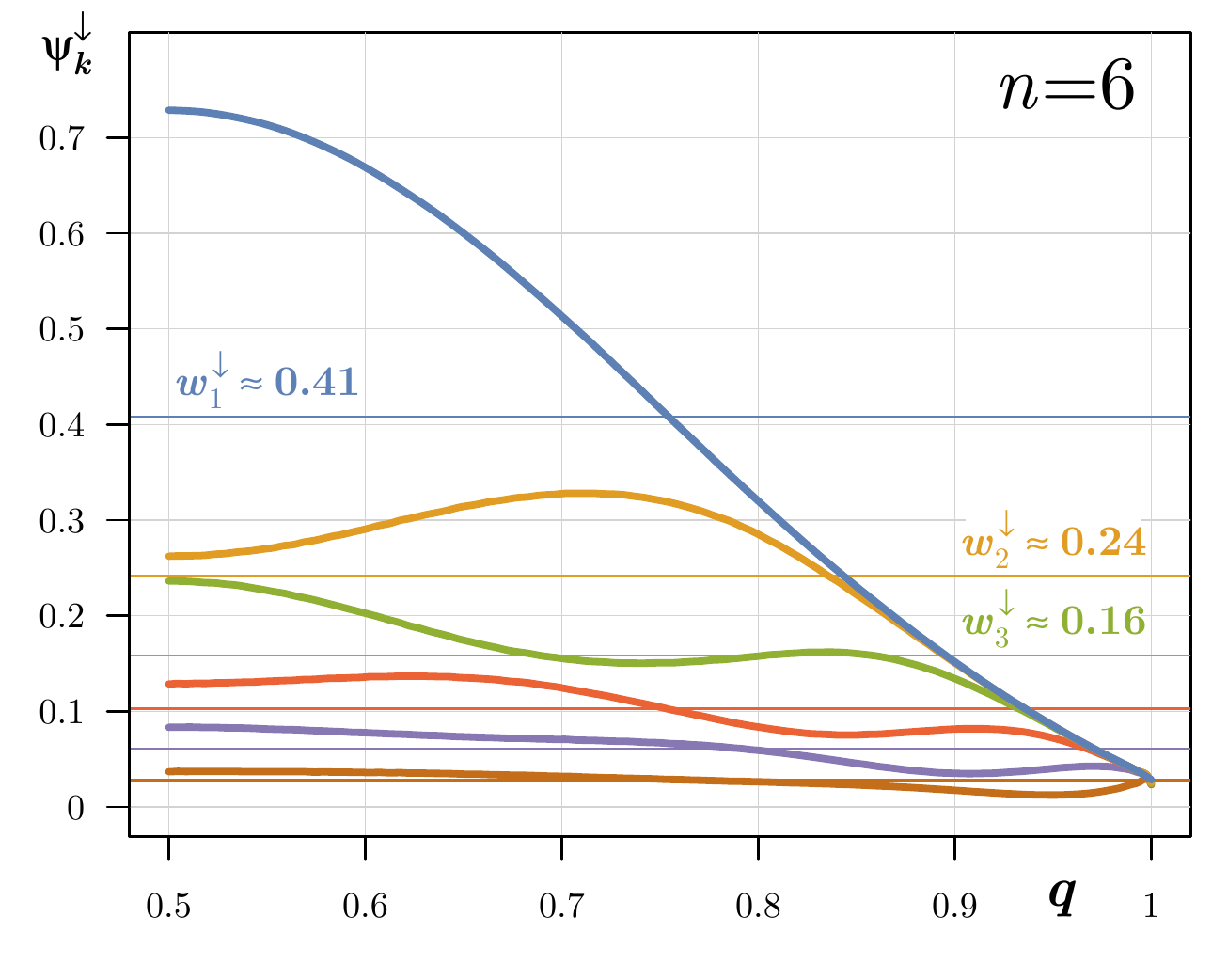}
	\includegraphics[width=0.48\linewidth]{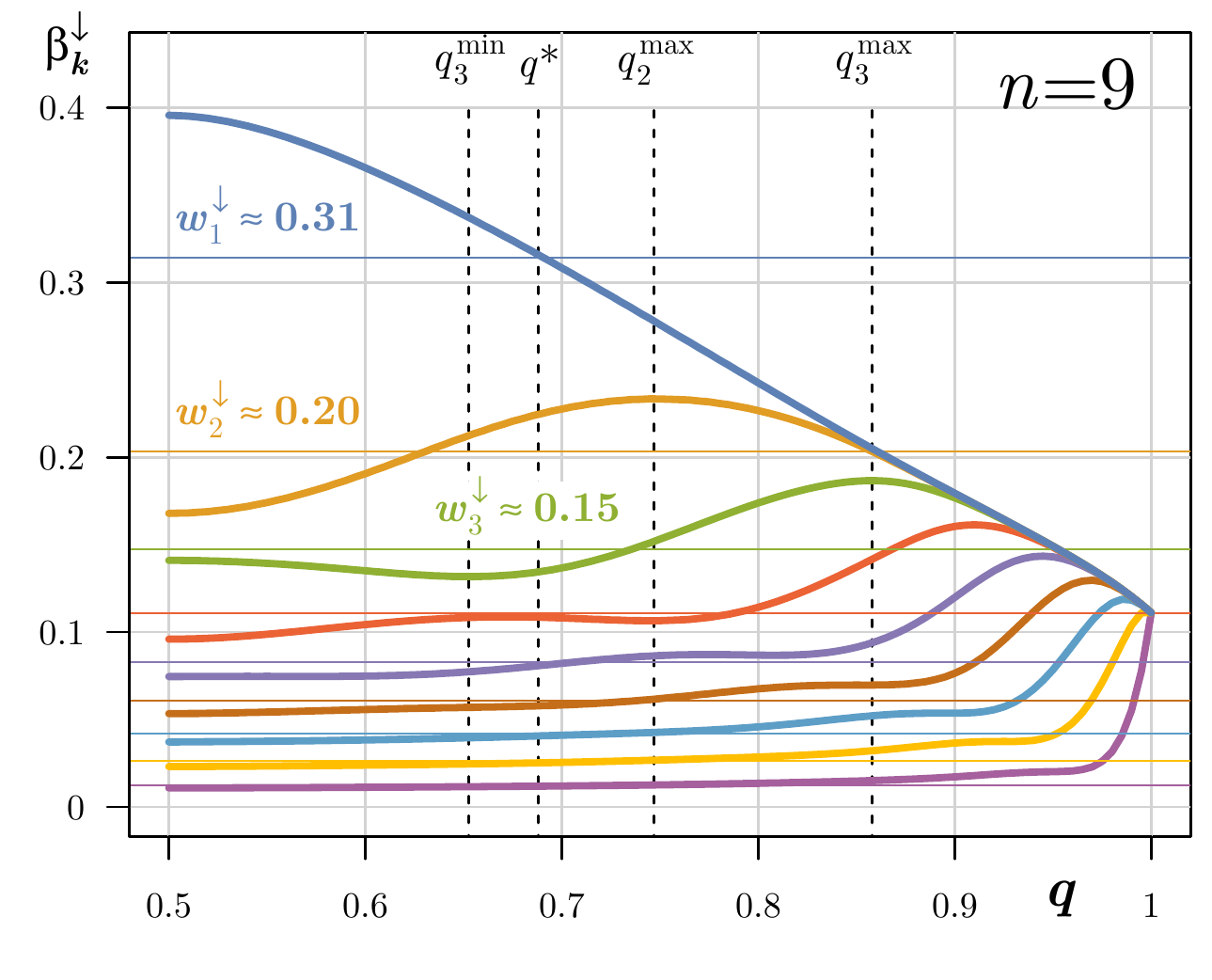}
	\includegraphics[width=0.48\linewidth]{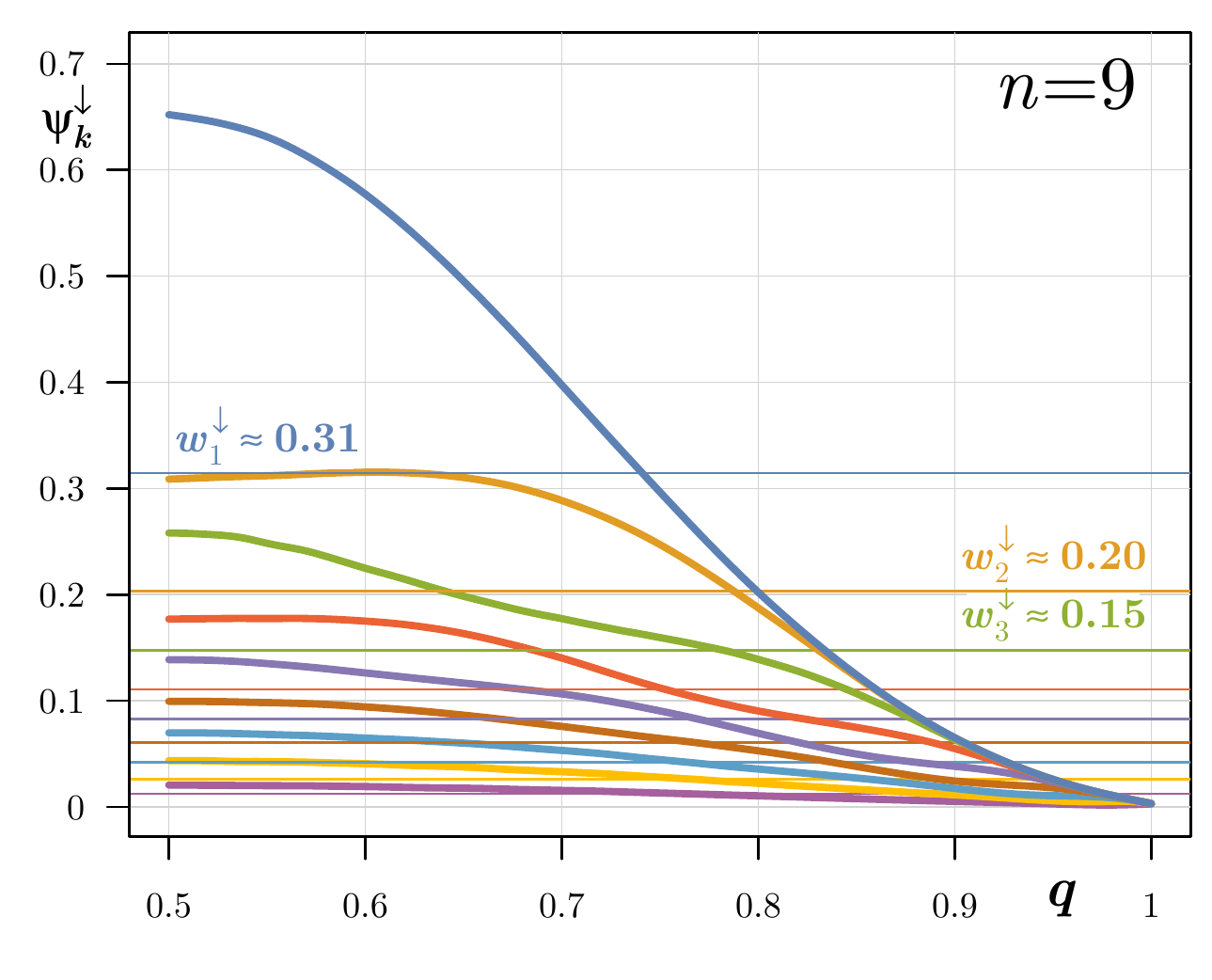}
	\caption{Absolute and normalized Penrose--Banzhaf power indices of $n$ players averaged
		over the probability simplex $\Delta_n$ with respect to the uniform measure as functions of the quota $q$.
		Horizontal lines represent the average voting weight of each player. The vertical line $q = q^*$
		represents the approximation of the quota minimizing the distance $\left\Vert \mathbf{w}-\boldsymbol{\beta }\right\Vert _{2}$, see \citet{ZyczkowskiSlomczynski13}. An earlier version of one of the figures appeared in \citet[p.~287]{RzazewskiEtAl14}.}
	\label{fig:sixrand}
\end{figure*}

For $q\rightarrow 1/2$ the average voting power of the largest player, $%
\mathbb{E(}\beta _{1}^{\downarrow })$, is considerably greater than their
average weight, $\mathbb{E(}w_{1}^{\downarrow })$, at the expense of all the
other players, and then decreases monotonically with the quota $q$. The
second player initially loses the most, but their average voting power, $%
\mathbb{E(}\beta _{2}^{\downarrow })$, increases up to its single maximum, $%
q_{2}^{\max }$, while the average voting power of the third player, $\mathbb{%
E(}\beta _{3}^{\downarrow })$, has two extrema, $q_{3}^{\min }$ and $%
q_{3}^{\max }$. The average voting power of small players initially
fluctuate mildly with $q$ around their average weights, with the amplitudes
of these fluctuations diminishing as $k$ increases, and for $q\rightarrow 1$
the voting powers of all players converge to $1/n$.

\medskip Careful examination of the numerical results suggests a following
conjecture:

\begin{conjecture}
\label{con:critPoints} For the uniform distribution on the probability
simplex $\Delta _{n}$ and for every $k=1,...,n$, the average normalized
Penrose--Banzhaf power index of the $k$--th largest player, $\mathbb{E}%
(\beta _{k}^{\downarrow })$, has exactly $k-1$ local extrema over $(1/2,1)$
as a function of $q$.
\end{conjecture}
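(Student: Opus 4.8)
The plan is to exploit the fact that each function $\Phi_k(q):=\mathbb{E}(\beta_k^\downarrow)$ is continuous and piecewise polynomial on $(1/2,1)$, so that the conjecture reduces to a statement about the number of sign changes of $\Phi_k'$. By (\ref{eq:expectBetaI}) we have $\Phi_k(q)=\sum_{\chi\in\mathcal{G}_n/\sim}\beta_k^\chi\,\lambda(L_\chi^q)$, and each region $L_\chi^q$ is a polytope cut out of $\tilde\Delta_n$ by hyperplanes $\{\sum_{v\in Q}w_v=q\}$. Since these facets translate linearly in $q$, each $\lambda(L_\chi^q)$ is a polynomial of degree at most $n-1$ on every interval between consecutive breakpoints, the breakpoints being confined to the rationals $j/m$ ($1\le j<m\le n$) at which the combinatorial type of the arrangement changes. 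Hence $\Phi_k'$ is piecewise polynomial of degree at most $n-2$, and we are left to count its sign changes on $(1/2,1)$.

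First I would fix the global shape. As $q\to1^-$ every game is unanimous off a null set, so $\Phi_k(1^-)=1/n$ for all $k$; combined with the identity $\sum_{k=1}^n\Phi_k(q)\equiv1$ (the normalized index always lies in $\Delta_n$) this pins the common right endpoint and yields $\sum_k\Phi_k'\equiv0$. The base case $k=1$ (no extrema) I would treat separately, by showing $\Phi_1$ is strictly decreasing, formalizing through the density representation below the intuition that the largest player steadily sheds decisive coalitions as the quota rises.

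For the core counting step I would pass first to the absolute index, where a clean representation is available. Writing $W_A:=\sum_{j\in A}W_j^\downarrow$ and $f_{W_A}$ for its (piecewise-polynomial) density, summing the pivot probabilities of the $k$-th player gives $\mathbb{E}(\psi_k^\downarrow)(q)=2^{1-n}\sum_{A\not\ni k}\bigl(F_{W_A}(q)-F_{W_{A\cup\{k\}}}(q)\bigr)$, so that after the telescoping $A\leftrightarrow A\cup\{k\}$ one obtains
\[
2^{n-1}\,\tfrac{d}{dq}\mathbb{E}(\psi_k^\downarrow)(q)=\sum_{B\not\ni k}f_{W_B}(q)-\sum_{B\ni k}f_{W_B}(q).
\]
The densities $f_{W_B}$ are explicit via the R\'enyi representation of Proposition \ref{propOrderStats} and the machinery behind Theorem \ref{thm:kPlayer}. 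I would then bound the sign changes of this signed sum by a variation-diminishing (total-positivity) argument: the goal is to realize the derivative as the image, under a sign-regular kernel, of a measure changing sign exactly $k-1$ times, so that the variation-diminishing property caps the count at $k-1$; a matching lower bound would follow by exhibiting $k-1$ breakpoints $j/m$ at which $\Phi_k'$ already alternates sign, with the extrema forced to interlace in $k$ by the constraint $\sum_k\Phi_k'\equiv0$.

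The hard part is twofold. First, the conjecture concerns the \emph{normalized} index: because the total Banzhaf power $\sum_j\psi_j^\downarrow$ is not constant on $\Delta_n$, one cannot divide the $\psi$-result by a scalar, and the ratio must instead be handled through the game-type decomposition (\ref{eq:expectBetaI}), where the coefficients $\beta_k^\chi$ are arbitrary rationals rather than $\pm1$ and the clean telescoping degrades into a flux-through-facets computation. Second, and more fundamentally, the exact upper bound of $k-1$ sign changes demands control of cancellations among super-exponentially many polynomial pieces \emph{uniformly in} $n$; the whole program hinges on finding the correct sign-regular kernel, or else on an induction in $k$ that provably adds exactly one oscillation per rank (mirroring the nested structure $W_k^\downarrow=\sum_{j\ge k}W_j/j$), and I expect this to absorb most of the effort.
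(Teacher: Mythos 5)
First, a point of calibration: the statement you set out to prove is Conjecture~\ref{con:critPoints} of the paper, and the paper itself does \emph{not} prove it. The authors verify it analytically only for $n=3$, via the explicit piecewise-quadratic formulas (\ref{eq:expectBeta123}) (whence $\mathbb{E}(\beta_2^\downarrow)$ has a single maximum at $q=34/39$ and $\mathbb{E}(\beta_3^\downarrow)$ has extrema at $5/9$ and $13/18$), support it numerically for larger $n$, and explicitly defer a general proof to future work. So there is no proof in the paper to compare yours against; the only question is whether your proposal closes the open problem, and it does not. What you have that is sound: the spline structure of $q\mapsto\mathbb{E}(\beta_k^\downarrow)$ is exactly the paper's Theorem~\ref{thm:spline} (your derivation essentially reproduces its proof), and your telescoping identity $2^{n-1}\tfrac{d}{dq}\mathbb{E}(\psi_k^\downarrow)(q)=\sum_{B\not\ni k}f_{W_B}(q)-\sum_{B\ni k}f_{W_B}(q)$ is correct (after discarding the degenerate terms $B=\emptyset$ and $B=V$, whose distribution functions are locally constant on $(1/2,1)$) and is a genuinely useful reduction for the \emph{absolute} index.

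Now the gaps, which you partly acknowledge but which are fatal as the proposal stands. (i) The entire upper bound rests on ``finding the correct sign-regular kernel'': no kernel is exhibited, no measure with exactly $k-1$ sign changes is exhibited, and no argument is given that such a representation exists --- this step is the conjecture restated in total-positivity language, not a proof of it. (ii) The matching lower bound (alternation of $\Phi_k'$ at $k-1$ breakpoints) is likewise only asserted, and your side claim that all breakpoints lie among the rationals $j/m$ with $1\le j<m\le n$ is unjustified; even the monotonicity of $\mathbb{E}(\beta_1^\downarrow)$, your base case, is asserted rather than proved. (iii) Most importantly, the conjecture concerns the normalized index $\beta$, while your pivot-probability machinery lives on $\psi$; since $\sum_j\psi_j$ is a nonconstant random variable, nothing you outline transfers, as you yourself concede. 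Finally, there is a concrete warning that no uniform ``exactly $k-1$'' argument of the kind you sketch can work: by the paper's own formula (\ref{eq:expectBetaOrd}), for $n=2$ one has $\mathbb{E}(\beta_2^\downarrow)=q-\tfrac12$, which is strictly increasing on $(1/2,1)$ and hence has zero interior local extrema rather than $k-1=1$; the conjecture as literally stated already fails at $n=2$, so any strategy purporting to force $k-1$ alternations for all $n$ and $k$ --- in particular your interlacing step --- must break somewhere. In short: your reduction is reasonable and overlaps the paper's Theorem~\ref{thm:spline}, but the actual mathematical content of the conjecture is left untouched, and the statement remains open.
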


\begin{remark}
Note that for $n=3$, Conjecture \ref{con:critPoints} follows immediately
from the analytic form of $\mathbb{E}(\beta _{k}^{\downarrow })$ given by
(\ref{eq:expectBeta123}). The voting power of the second largest player, 
$\mathbb{E(}\beta _{2}^{\downarrow })$, admits a maximal value at $%
q_{2}^{\max }=34/39$ $(\approx 87.18\%)$, while $\mathbb{E(}\beta
_{3}^{\downarrow })$ exhibits a minimum at $q_{3}^{\min }=5/9$ $(\approx
55.56\%)$ and a maximum at $q_{3}^{\max }=13/18$ $(\approx 72.22\%)$.
\end{remark}


\section{The power of a collectivity to act\label{sec:coleman}}

\emph{The power of a collectivity to act}, i.e., the ease of reaching a
decision, is usually measured with the \emph{Coleman efficiency index }\citep%
{Coleman71}, defined as the probability that a random coalition $Q\in 
\mathcal{P}(V)$ is a winning one:%
\begin{equation}
C:=\frac{\omega }{2^{n}},
\end{equation}%
where $\omega :=\left\vert \mathcal{W}\right\vert $.

\begin{remark}
Note that $C$ is a decreasing function of the quota $q\in (\frac{1}{2},1]$. Since it
is impossible for any coalition $Q\in \mathcal{P}(V)$ that both $Q$ and $%
V\setminus Q$ be winning, $C\leq \frac{1}{2}$. On the other hand, $C\geq
C\left( 1\right) =2^{-\left\vert\left\{ j\,=\,1,\dots ,n~:~w_{j}>0\right\}\right\vert }.$%
\smallskip 
\end{remark}

Let $\mu _{n}$ be the \emph{Bernoulli measure} on $\left\{ 0,1\right\} ^{n}$%
, and let \linebreak $Z:\Delta _{n}\times \left\{ 0,1\right\} ^{n}\rightarrow \mathbb{R}
$ be given by the formula $Z\left( \mathbf{w},\mathbf{\xi }\right)
:=\sum_{i=1}^{n}w_{i}\xi _{i}-\frac{1}{2}$, where $\mathbf{w}\in \Delta _{n}$
and $\mathbf{\xi }\in \left\{ 0,1\right\} ^{n}$. Note that%
\begin{equation}
\mathbb{E}_{\lambda \times \mu _{n}}\left( C\right) =1-F_{Z}\left( q-\frac{1%
}{2}\right) ,  \label{colemanDistZ}
\end{equation}%
where $\lambda $ is the Lebesgue measure on $\Delta _{n}$, and $F_{Z}$ is the distribution function of $Z$ with respect to the
probability measure $\lambda \times \mu _{n}$ on $\Delta _{n}\times \left\{
0,1\right\} ^{n}$.
This distribution function can be calculated by the following proposition:

\begin{theorem}
\label{colemanCF}The characteristic function of $Z$ is given by%
\begin{equation}
\varphi _{Z}\left( t\right) ={}_{1}\digamma _{2}\left( 
\begin{array}{c}
n \\ 
\frac{1}{2}+\frac{n}{2},\frac{n}{2}%
\end{array}%
;-\left( \frac{t}{4}\right) ^{2}\right) ,
\end{equation}%
for $t \in \mathbb{R}$, where $_{1}\digamma _{2}$ is a generalized hypergeometric function.
\end{theorem}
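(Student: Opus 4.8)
The plan is to compute the characteristic function $\varphi_Z(t)=\mathbb{E}_{\lambda\times\mu_n}\!\left(e^{itZ}\right)$ as an explicit power series in $t$ and then recognize that series as the stated $_1\digamma_2$, using only the product--moment formula of Theorem \ref{thm:moments}.

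First I would reduce the double average to a single integral over the simplex. Because $\sum_{i=1}^n w_i=1$, the constant $-\tfrac12$ can be absorbed: $Z(\mathbf{w},\xi)=\sum_i w_i\xi_i-\tfrac12=\sum_i w_i\!\left(\xi_i-\tfrac12\right)$, so that $e^{itZ}=\prod_i e^{itw_i(\xi_i-1/2)}$. Under the Bernoulli measure $\mu_n$ the coordinates $\xi_i$ are independent fair coin flips, so averaging over $\xi$ with $\mathbf{w}$ held fixed factorizes coordinatewise, and $\mathbb{E}_{\xi_i}e^{itw_i(\xi_i-1/2)}=\tfrac12\!\left(e^{-itw_i/2}+e^{itw_i/2}\right)=\cos\!\left(tw_i/2\right)$. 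This yields
\begin{equation}
\varphi_Z(t)=\mathbb{E}_{\mathbf{w}}\prod_{i=1}^{n}\cos\!\Big(\frac{tw_i}{2}\Big),
\end{equation}
which is visibly real and even in $t$, as it must be since $\xi\mapsto\mathbf 1-\xi$ shows $Z$ is symmetric about $0$.

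The heart of the argument is to expand each cosine in its Taylor series and integrate term by term:
\begin{equation}
\prod_{i=1}^{n}\cos\!\Big(\frac{tw_i}{2}\Big)=\sum_{\mathbf{l}\in\mathbb{N}^{n}}\Big(\frac{t}{2}\Big)^{2|\mathbf{l}|}\frac{(-1)^{|\mathbf{l}|}}{\prod_{i}(2l_i)!}\prod_{i}w_i^{2l_i},
\end{equation}
and apply Theorem \ref{thm:moments} with $\mathbf{m}=2\mathbf{l}$, i.e.\ $\mathbb{E}\prod_i W_i^{2l_i}=\prod_i(2l_i)!/(n)_{2|\mathbf{l}|}$. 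The whole point is that the factorials $\prod_i(2l_i)!$ cancel exactly, so that
\begin{equation}
\varphi_Z(t)=\sum_{\mathbf{l}\in\mathbb{N}^{n}}\frac{(-1)^{|\mathbf{l}|}}{(n)_{2|\mathbf{l}|}}\Big(\frac{t}{2}\Big)^{2|\mathbf{l}|}.
\end{equation}
Collecting the $\binom{m+n-1}{n-1}=(n)_m/m!$ multi-indices with $|\mathbf{l}|=m$ collapses this to the single series $\sum_{m\ge0}\frac{(n)_m}{(n)_{2m}}\frac{(-1)^m}{m!}\big(\tfrac{t}{2}\big)^{2m}$, and the Pochhammer duplication identity $(n)_{2m}=4^m\big(\tfrac n2\big)_m\big(\tfrac{n+1}{2}\big)_m$ rewrites the $m$-th coefficient so that the series becomes exactly $_1\digamma_2\!\big(n;\tfrac{n+1}{2},\tfrac n2;-(t/4)^2\big)$.

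The step requiring care is the justification of the term-by-term integration. I would control it by absolute convergence: replacing every cosine by $\cosh$ and every sign by its modulus, the absolute series sums to $\mathbb{E}_{\mathbf{w}}\prod_i\cosh(tW_i/2)$, a bounded quantity on the compact simplex, so Tonelli's theorem legitimates the interchange and the subsequent regrouping. No other obstacle arises; the algebraic identities (product--moment cancellation, multi-index count, Pochhammer duplication) are routine once the reduction to $\prod_i\cos(tw_i/2)$ is in place.
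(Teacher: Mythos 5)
Your proposal is correct and follows essentially the same route as the paper's proof: averaging over the Bernoulli coordinates to reduce to $\mathbb{E}_{\mathbf{w}}\prod_{i}\cos\left(tw_{i}/2\right)$, expanding the cosines into a multi-index power series, integrating term by term via Theorem \ref{thm:moments} with the same factorial cancellation, and resumming over multi-indices of fixed total degree to recognize the series as ${}_{1}\digamma_{2}$. If anything, you are more explicit than the paper at two points: the Tonelli justification of the term-by-term integration (the paper merely asserts absolute convergence) and the final identification via $\binom{m+n-1}{n-1}=(n)_{m}/m!$ together with the duplication identity $(n)_{2m}=4^{m}\left(\tfrac{n}{2}\right)_{m}\left(\tfrac{n+1}{2}\right)_{m}$, which the paper leaves implicit.
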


\begin{proof}
For a fixed $\mathbf{w}\in \Delta _{n}$ and $k=1,\dots ,n$, let $%
X_{k}:=w_{k}(\xi _{k}-\frac{1}{2})$ and $X:=\sum_{k=1}^{n}X_{k}$. Then for $%
t\in \mathbb{R}$,%
\begin{equation}
\varphi _{X_{k}}\left( t\right) =\frac{1}{2}\left( e^{\frac{1}{2}%
itw_{k}}+e^{-\frac{1}{2}itw_{k}}\right) =\cos \left( \frac{tw_{k}}{2}\right)
,
\end{equation}

and%
\begin{align}
\varphi _{X}\left( t\right) &=\prod_{k=1}^{n}\cos \left( \frac{tw_{k}}{2}%
\right) =\sum_{j=0}^{\infty }\frac{t^{j}}{j!}\left. \frac{d^{j}}{dt^{j}}%
\prod_{k=1}^{n}\cos \left( \frac{tw_{k}}{2}\right) \right\vert _{t=0}  \notag
\\
&=\sum_{j=0}^{\infty }\frac{t^{2j}}{\left( 2j\right) !}\sum_{j_{1}+\ldots
+j_{n}=j}\left( -1\right) ^{j}\,2^{-2j}\left( 2j\right) !\prod_{k=1}^{n}%
\frac{w_{k}^{2j_{k}}}{\left( 2j_{k}\right) !}  \notag \\
&=\sum_{j=0}^{\infty }\left( -1\right) ^{j}\left( \frac{t}{2}\right)
^{2j}\sum_{j_{1}+\ldots +j_{n}=j}\prod_{k=1}^{n}\frac{w_{k}^{2j_{k}}}{\left(
2j_{k}\right) !}.
\end{align}%
It can be shown that the resulting series is absolutely convergent. Hence,
and by Theorem \ref{thm:moments},%
\begin{align}
\varphi _{Z}\left( t\right) &=\int_{\Delta _{n}}\varphi _{X}\left( t\right)
\,d\lambda \notag \\
&=\sum_{j=0}^{\infty }\left( -1\right) ^{j}\left( \frac{t}{2}%
\right) ^{2j}\sum_{j_{1}+\ldots +j_{n}=j} \frac{\mathbb{E}\left( \prod_{k=1}^{n}W_{k}^{2j_{k}}\right)}{\prod_{k=1}^{n}\left(
2j_{k}\right) !}  \notag
\notag \\
&=\sum_{j=0}^{\infty }\left( -1\right) ^{j}\left( \frac{t}{2}\right) ^{2j}%
\frac{1}{\left( n\right) _{2j}}\binom{j+n-1}{n-1} \notag \\
&={}_{1}\digamma _{2}\left( 
\begin{array}{c}
n \\ 
\frac{1}{2}+\frac{n}{2},\frac{n}{2}%
\end{array}%
;-\left( \frac{t}{4}\right) ^{2}\right) ,
\end{align}%
as desired.\bigskip
\end{proof}

Thus by numerical inversion of the characteristic function $\varphi _{Z}$,
we can easily estimate the expected Coleman efficiency index $\mathbb{E}%
_{\lambda \times \mu _{n}}(C)$ for any quota $q\in (\frac{1}{2},1]$. The
results for a number of arbitrarily chosen values of $n$ are plotted on Fig. %
\ref{fig:coleman}.

\begin{figure}[h]
	\centering
	\includegraphics[width=\linewidth]{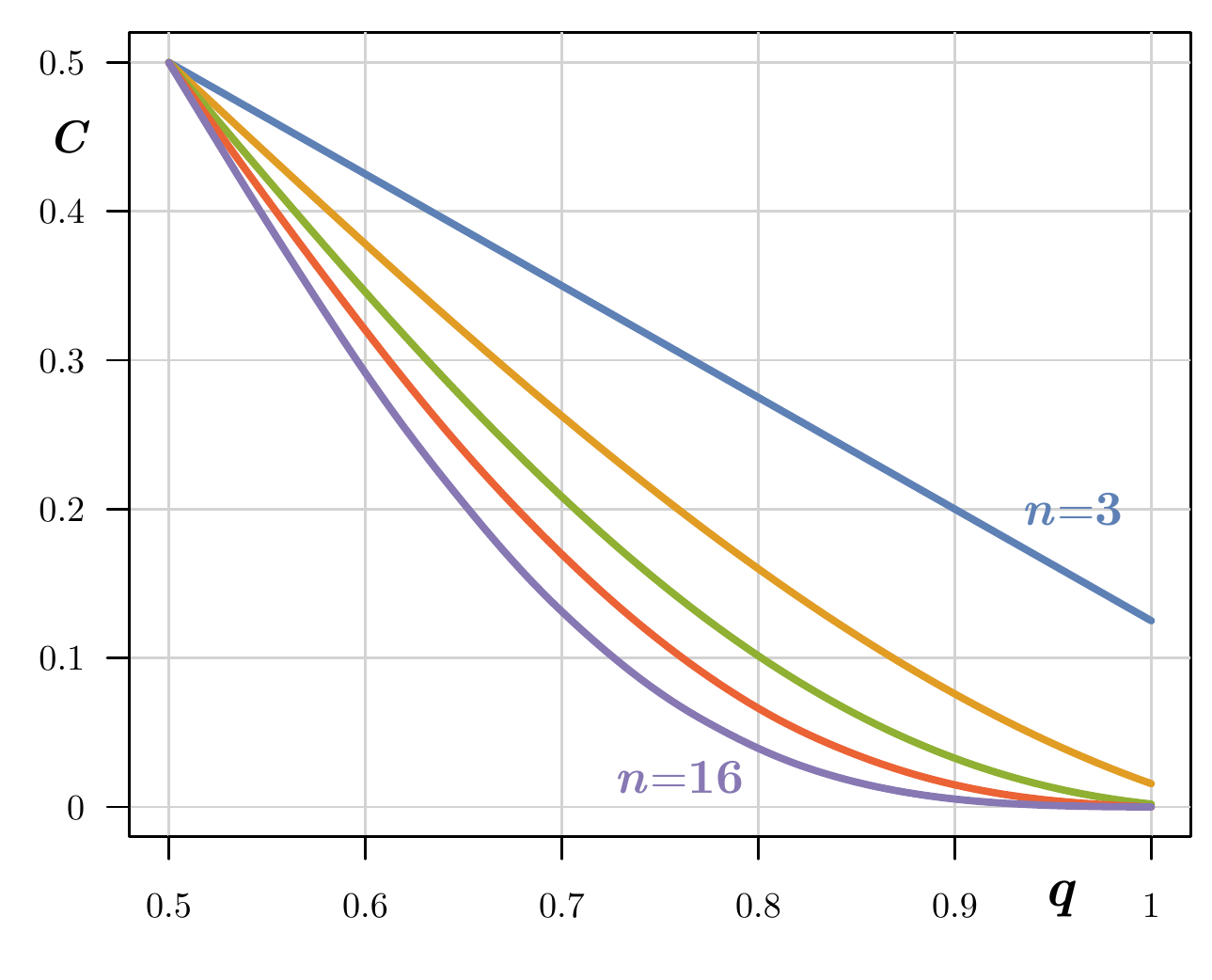}
	\caption{Coleman efficiency indices $C$ of weighted voting games with $n = 3, 6, 9, 12, 16$, averaged
		in each case over $\Delta_n$ with respect to the uniform measure, as functions of the quota $q$.}
	\label{fig:coleman}
\end{figure}%

The following results provide analytical formulae for, respectively, the
upper bound and the asymptotic approximation of the Coleman efficiency index.

\begin{remark}
Let $\mathbf{W}\sim \func{Unif}\left( \Delta _{n}\right) $. By the central
limit theorem and (\ref{colemanDistZ}), the expected Coleman
efficiency index, $\mathbb{E}_{\lambda \times \mu _{n}}\left( C\right) $,
can be approximated for fixed $n$ and $q$ by%
\begin{equation}
C_{1}:=1-\Phi \left( \sqrt{2\left( n+1\right) }\left( q-\frac{1}{2}\right)
\right) ,  \label{colemanNorm}
\end{equation}%
where $\Phi $ is the standard normal cumulative distribution function.\ The
upper bound for the approximation error can be obtained from the
Berry--Esseen theorem \citep{Berry41,Esseen42}. However, numerical
simulations suggest that it exceeds the actual approximation error by
several orders of magnitude.
\end{remark}

The above approximation is
particularly useful when one is interested in finding such value of $q$ as
to obtain a specific expected Coleman efficiency index, see Fig.  \ref{fig:colemanErr}.

\begin{figure}[h]
	\centering
	\begin{overpic}[width=\linewidth]{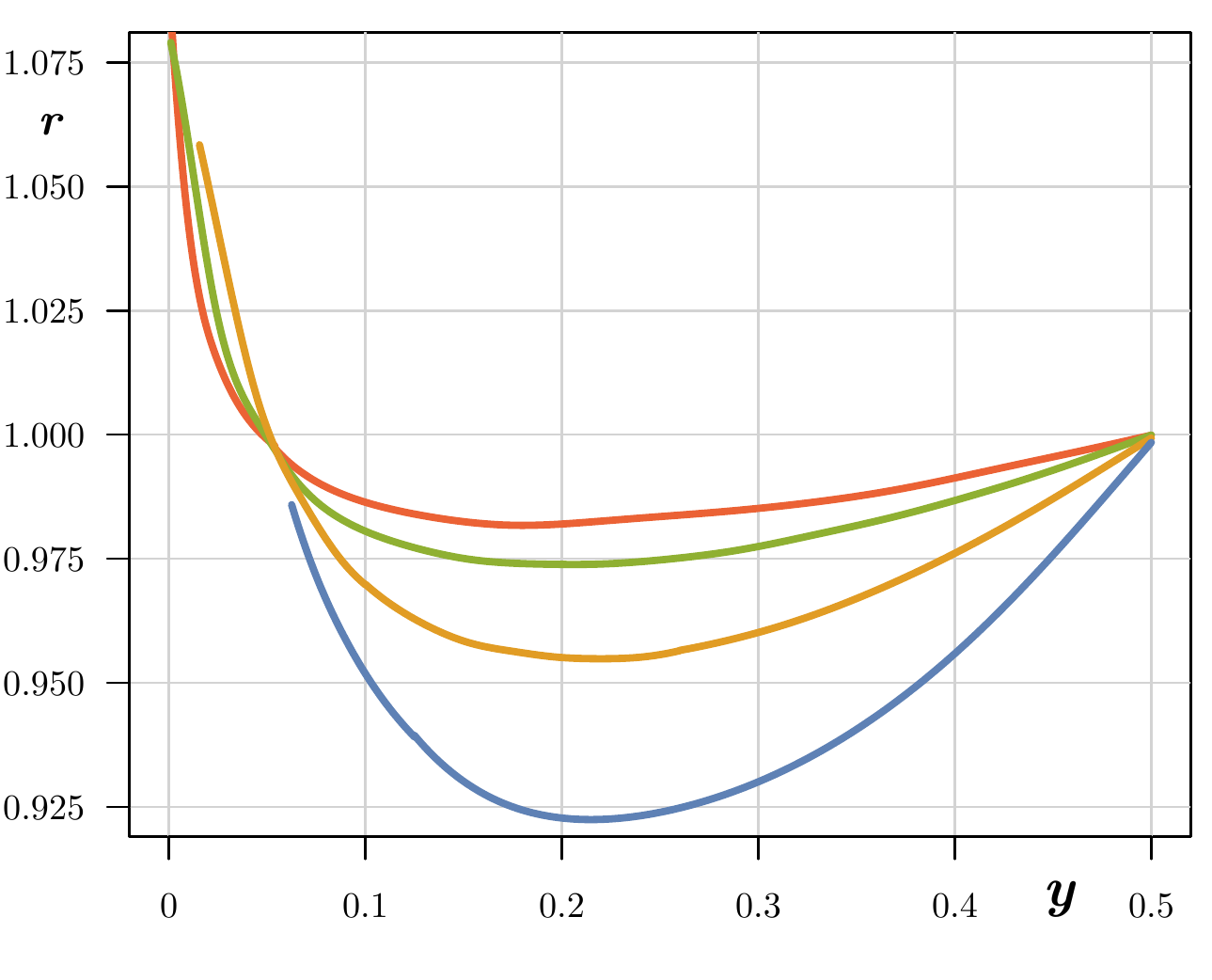}
		\put(45,62){\includegraphics[width=4.5cm]{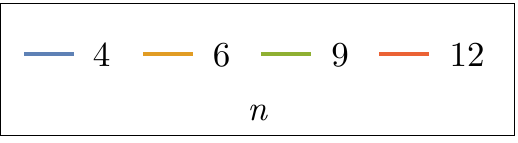}}
	\end{overpic}
	\vspace{-0.6cm}
	\caption{Error ratio $r$ of the approximation (\ref{colemanNorm}) of the expected Coleman efficiency index
		in random weighted voting games with $n = 4, 6, 9, 12$, where
		$r(y) := C_{1}^{-1}(y) / (\mathbb{E}(C))^{-1}(y) = C^{-1}_1 (\mathbb{E}(C)(q)) / q$
		for $y = \mathbb{E}(C)(q) \in [2^{-2n}, 2^{-1}]$.}
	\label{fig:colemanErr}
\end{figure}%

For any fixed weight vector $\mathbf{w}$, we have the following upper bound for the
Coleman efficiency index $C$:

\begin{proposition}
\label{thm:colemanHoef}In a weighted voting game with $q\geq 1/2$ the
Coleman efficiency index $C$ is bounded from the above in the following
manner:%
\begin{equation}
C\leq \exp \left( -\frac{2\left( q-1/2\right) ^{2}}{\sum_{i=1}^{n}w_{i}^{2}}%
\right) .
\end{equation}
\end{proposition}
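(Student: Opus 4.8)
The plan is to recognize $C$ as a right-tail probability of a sum of independent, bounded random variables, and then apply Hoeffding's inequality. Let $\xi_{1},\dots ,\xi_{n}$ be independent Bernoulli random variables with parameter $1/2$, so that $\left(\xi_{1},\dots ,\xi_{n}\right)$ is distributed according to the Bernoulli measure $\mu_{n}$ on $\left\{0,1\right\}^{n}$ and $Q:=\left\{i:\xi_{i}=1\right\}$ is a uniformly random element of $\mathcal{P}(V)$. By the definition of the set $\mathcal{W}$ of winning coalitions, this gives
\[
C=\frac{\omega}{2^{n}}=\Pr\bigg[\sum_{i=1}^{n}w_{i}\xi_{i}\geq q\bigg].
\]

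Next I would locate the mean of the coalition weight. Writing $X_{i}:=w_{i}\xi_{i}$ and $S:=\sum_{i=1}^{n}X_{i}$, independence and $\mathbb{E}(\xi_{i})=1/2$ give $\mathbb{E}(S)=\tfrac{1}{2}\sum_{i=1}^{n}w_{i}=\tfrac{1}{2}$, where the last equality uses the crucial fact that $\mathbf{w}\in\Delta_{n}$ forces $\sum_{i=1}^{n}w_{i}=1$. Hence $C=\Pr\left[S-\mathbb{E}(S)\geq q-\tfrac{1}{2}\right]$, and the hypothesis $q\geq 1/2$ ensures that the deviation $t:=q-\tfrac{1}{2}$ is nonnegative, so this is a genuine upper-tail event to which a one-sided concentration bound applies.

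Finally I would invoke Hoeffding's inequality. Each summand satisfies $0\leq X_{i}\leq w_{i}$ almost surely, so the range of $X_{i}$ has length $w_{i}$ and $\sum_{i=1}^{n}(w_{i}-0)^{2}=\sum_{i=1}^{n}w_{i}^{2}$. Hoeffding's bound therefore yields
\[
C=\Pr\left[S-\mathbb{E}(S)\geq t\right]\leq \exp\bigg(-\frac{2t^{2}}{\sum_{i=1}^{n}w_{i}^{2}}\bigg)=\exp\bigg(-\frac{2\left(q-1/2\right)^{2}}{\sum_{i=1}^{n}w_{i}^{2}}\bigg),
\]
which is exactly the claimed inequality. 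There is essentially no deep obstacle here; the proof is a direct application, and the only points requiring care are the two facts that make it work cleanly: (i) the weights summing to one pins $\mathbb{E}(S)$ exactly at $1/2$, which is what turns the winning threshold $q$ into the deviation $q-1/2$; and (ii) the one-sided form of Hoeffding's inequality requires $t\geq 0$, which is guaranteed precisely by the standing assumption $q\geq 1/2$.
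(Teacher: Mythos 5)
Your proof is correct and follows essentially the same route as the paper: both recognize $C$ as the upper-tail probability $\Pr\left[\sum_{i=1}^{n}w_{i}\xi_{i}\geq q\right]$ under the Bernoulli measure and apply Hoeffding's inequality with $Y_{i}=w_{i}\xi_{i}$ bounded in $[0,w_{i}]$ and deviation $h=q-\tfrac{1}{2}$. Your write-up merely makes explicit two points the paper leaves implicit, namely that $\sum_{i=1}^{n}w_{i}=1$ pins the mean at $\tfrac{1}{2}$ and that $q\geq\tfrac{1}{2}$ guarantees the deviation is nonnegative.
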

 
\begin{proof}
A proof follows from the Hoeffding's inequality \citep{Hoeffding63}. If $%
Y_{1},\dots ,Y_{n}$ are independent random variables such that $Y_{i}$ is
almost surely bounded by $\left[ \tau _{i}^{-},\tau _{i}^{+}\right] $ for
every $i=1,\dots ,n$, then for any $h\geq 0$:%
\begin{equation}
\Pr \left( \sum_{i=1}^{n}\left(Y_{i}-\mathbb{E}\left( Y_{i}\right)\right)
\geq h\right) \leq \exp \left( \frac{-2h^{2}}{\sum_{i=1}^{n}\left( \tau
_{i}^{+}-\tau _{i}^{-}\right) ^{2}}\right).  \label{hoeffding}
\end{equation}%
Putting $\Pr =\mu _{n}$, $Y_{i}=w_{i}\xi _{i}$, $h=q-1/2$, $\tau
_{i}^{+}=w_{i}$ and $\tau _{i}^{-}=0$, we obtain Proposition \ref%
{thm:colemanHoef}.
\end{proof}

\section{Splines}

Any quantity that is a function of the weighted voting game (e.g., the Coleman efficiency index, the Penrose--Banzhaf and Shapley--Shubik power indices, etc.), averaged over the probability simplex $\Delta _{n}$, and considered as a function of the quota, has the following property:

\begin{theorem}
\label{thm:spline}Let $U:\Delta _{n}\times (\frac{1}{2},1]\rightarrow 
\mathcal{G}_{n}$ be a function mapping a weight vector and a quota to the related
weighted voting game. If $\mathbf{W}\sim \func{Unif}\left( \Delta
_{n}\right) $, then for any $p:\mathcal{G}_{n}\rightarrow \mathbb{R}$,%
\begin{equation}
(1/2,1]\ni q\rightarrow \mathbb{E}\left( p\left( U\left( \mathbf{W}%
,q\right) \right) \right) \in \mathbb{R}
\end{equation}%
is a spline of degree at most $n-1$.
\end{theorem}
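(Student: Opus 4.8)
The plan is to reduce the expectation to a finite sum of volumes of polytopes whose facets translate linearly with $q$, and then to show that each such volume is a continuous, piecewise polynomial function of degree at most $n-1$. First I would record the structural observation that makes this possible: writing $s_{Q}(\mathbf{w}):=\sum_{v\in Q}w_{v}$, the game $U(\mathbf{w},q)$ is, as a simple game, determined by its set of winning coalitions $\mathcal{W}(\mathbf{w},q)=\{Q\subseteq V:s_{Q}(\mathbf{w})\geq q\}$, and there are only finitely many such sets (each is a subset of the finite power set $\mathcal{P}(V)$). Hence $p\circ U$ takes finitely many values, and, grouping weight vectors by winning-coalition structure,
\[
\mathbb{E}\bigl(p(U(\mathbf{W},q))\bigr)=\sum_{\mathcal{W}}p_{\mathcal{W}}\,\lambda\bigl(R_{\mathcal{W}}(q)\bigr),
\]
where the sum ranges over the finitely many realizable structures $\mathcal{W}$, the number $p_{\mathcal{W}}$ is the constant value of $p$ on that structure, and $R_{\mathcal{W}}(q):=\{\mathbf{w}\in\Delta_{n}:s_{Q}(\mathbf{w})\geq q\ \forall Q\in\mathcal{W},\ s_{Q}(\mathbf{w})<q\ \forall Q\notin\mathcal{W}\}$. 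Since a finite linear combination of splines of degree $\leq n-1$ is again such a spline, it suffices to prove the claim for $q\mapsto\lambda(R_{\mathcal{W}}(q))$ with $\mathcal{W}$ fixed.

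Next I would replace $R_{\mathcal{W}}(q)$ by an honest polytope. Its closure is obtained by intersecting the $(n-1)$-dimensional simplex $\Delta_{n}$ with the half-spaces $\{s_{Q}\geq q\}$ ($Q\in\mathcal{W}$) and $\{s_{Q}\leq q\}$ ($Q\notin\mathcal{W}$); since $R_{\mathcal{W}}(q)$ differs from its closure only along hyperplanes $\{s_{Q}=q\}$, a Lebesgue-null set, the two have equal measure. Working in the affine chart $w_{n}=1-\sum_{i<n}w_{i}$, this is a convex polytope $P(q)\subset\mathbb{R}^{n-1}$ cut out by finitely many affine inequalities in which $q$ appears only, and linearly, on the right-hand sides; thus all bounding hyperplanes translate in parallel as $q$ varies. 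The central claim is then a general fact about such parametric polytopes: there is a finite set of knots $q_{0}<\dots<q_{m}$, namely the values at which the combinatorial type of $P(q)$ changes, such that on each $[q_{j},q_{j+1}]$ the volume of $P(q)$ is a polynomial in $q$ of degree at most $n-1$.

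I would establish this by a triangulation argument. Each vertex of $P(q)$ is the unique solution of a fixed square subsystem of the defining equations, whose coefficient matrix is independent of $q$ and whose right-hand side is affine in $q$; by Cramer's rule every vertex $\mathbf{v}_{I}(q)$ is therefore an affine (degree-$1$) function of $q$. The combinatorial type can change only where some $\mathbf{v}_{I}(q)$ crosses another bounding hyperplane, i.e.\ where $s_{Q'}(\mathbf{v}_{I}(q))=q$ or $(\mathbf{v}_{I}(q))_{i}=0$; each is a single affine equation in $q$, with finitely many solutions, so the knots are finite in number. On each open interval between consecutive knots the combinatorial type is constant, so $P(q)$ admits a triangulation into $(n-1)$-simplices whose vertices are among the $\mathbf{v}_{I}(q)$ and whose orientation does not change. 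The volume of each simplex is $\tfrac{1}{(n-1)!}\bigl|\det M(q)\bigr|$, where the columns of the $(n-1)\times(n-1)$ matrix $M(q)$ are differences of vertices, hence affine in $q$; the determinant of a matrix with affine entries is a polynomial in $q$ of degree at most $n-1$, and, the orientation being fixed on the interval, the absolute value may be dropped. Summing over the simplices gives a polynomial of degree $\leq n-1$.

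Finally I would note that $q\mapsto\lambda(P(q))$ is continuous, since $P(q)\triangle P(q')$ has measure tending to $0$ as $q'\to q$; a continuous function that is piecewise polynomial of degree $\leq n-1$ with finitely many knots is, by definition, a spline of degree at most $n-1$. Applying this to every $\mathcal{W}$ and recombining through the displayed finite sum yields the theorem. I expect the triangulation step to be the main obstacle: one must verify that the combinatorial type is genuinely constant between knots and that a coherent triangulation by the affine-in-$q$ vertices $\mathbf{v}_{I}(q)$ can be chosen there, which is precisely what ties the degree bound to the ambient dimension $n-1$.
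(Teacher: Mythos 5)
Your proof is correct, and its first step---partitioning $\Delta_{n}$ by winning-coalition structure so that the expectation becomes the finite linear combination $\sum_{\mathcal{W}}p_{\mathcal{W}}\,\lambda(R_{\mathcal{W}}(q))$ of volumes of regions cut out by the constraints $s_{Q}\geq q$ or $s_{Q}\leq q$---is exactly the paper's decomposition into weight polytopes $P_{G}^{q}$. Where you diverge is the key lemma: the paper treats each region as a fixed convex polytope of dimension $n$ inside $\Delta_{n}\times(\frac{1}{2},1]$ and simply \emph{cites} the known theorem (De Boor--H\"{o}llig, Bieri--Nef, Lawrence, Gritzmann--Klee) that the volume of the section of an $n$-polytope by a moving parallel hyperplane is a spline of degree at most $n-1$, whereas you view the same object as a parametric polytope $P(q)\subset\mathbb{R}^{n-1}$ with facets translating linearly in $q$ and re-prove that fact from scratch: vertices affine in $q$ by Cramer's rule, finitely many knots where the combinatorial type changes, and between knots a fixed combinatorial triangulation whose simplex volumes are determinants of matrices with affine-in-$q$ entries, hence polynomials of degree at most $n-1$. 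Your route is self-contained and makes the degree bound completely transparent; its cost is that you must carry the real content of the cited theorems yourself, namely the constancy of the combinatorial type between knots and the existence of a coherent vertex triangulation there (e.g.\ a pulling triangulation determined by the face lattice, which cannot degenerate on a knot-free interval, so each determinant keeps a fixed sign)---precisely the step you flag as delicate. Two minor points to tidy: in the null-set argument, note that for realizable structures the coalitions $Q=\emptyset$ and $Q=V$ impose vacuous constraints, so every hyperplane $\{s_{Q}=q\}$ that actually matters is a proper affine section of $\Delta_{n}$ and hence genuinely null; and when a structure is realizable only for some quotas, observe that $\{q:P(q)\neq\emptyset\}$ is an interval (the feasible set in $(\mathbf{w},q)$-space is convex), so the volume function is still piecewise polynomial globally.
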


\begin{proof}
Note that $\left( \Delta _{n}\times (1/2,1]\right) /\ker U$ is a
partition of the polytope $\Delta _{n}\times (1/2,1]$ into blocks
\begin{equation}
P_{G}:=\left\{ (\mathbf{w,}q\mathbf{)}\in \Delta _{n}\times (1/2,1]:U\left( \mathbf{w},q\right) =G\right\}
\end{equation}for $G\in \mathcal{G}_{n}$.
Each $P_{G}$ is a convex polytope \citep[ch. 2]{GrunbaumEtAl03}, since it can
be described by a system of $2^{n}$ linear inequalities with one inequality
for each coalition $Q\in \mathcal{P}(V)$, corresponding to the condition
that $Q$ be winning or losing, i.e., that $\left\langle \boldsymbol{1}_{Q},%
\mathbf{w}\right\rangle \geq q$ if $Q\in G$ and $\left\langle \boldsymbol{1}%
_{Q},\mathbf{w}\right\rangle \leq q$ if $Q\notin G$ \citep{MasonParsley16}.
For every $G\in \mathcal{G}_{n}$, and $q\in (\frac{1}{2},1]$, the
intersection of $P_{G}$ and an affine hyperplane $\Theta _{q}:=\{\mathbf{x}%
\in \mathbb{R}^{n}:\left\langle \mathbf{1},\mathbf{x}\right\rangle
=1\}\times \{q\}$ parallel to $\Delta _{n}$, is called the \emph{weight
polytope} $P_{G}^{q}$\emph{\ }\citep{Kurz18a}.

For any $p:\mathcal{G}_{n}\rightarrow \mathbb{R}$, let $q\in (1/2,1]$
be fixed. Clearly, $p\left( U\left( \mathbf{W},q\right) \right) $ is
constant over $P_{G}^{q}$ for each $G\in \mathcal{G}_{n}$. Thus, $\mathbb{E}
\left( p\left( U\left( \mathbf{W},q\right) \right) \right) $ is an affine
combination of the volumes of weight polytopes:%
\begin{equation}
\mathbb{E}\left( p\left( U\left( \mathbf{W},q\right) \right) \right)
=\sum_{G\in \mathcal{G}_{n}}p\left( G\right) \,\lambda \left(
P_{G}^{q}\right) ,
\end{equation}%
where $\lambda $ is the Lebesgue measure on $\Delta _{n}\times \left\{
q\right\} $. It is well--known that the volume of an intersection of an $n$%
--polytope $P$ and a moving hyperplane $\Theta _{t}$ sweeping $P$ over some
interval $\left( t_{0},t_{1}\right) \subset \mathbb{R}$ is a piecewise
polynomial function (spline) of $t$ of degree at most $n-1$ 
\citep[Theorem~3.2.1]{DeBoorHollig82,BieriNef83,Lawrence91,GritzmannKlee94}.
Thus, $\mathbb{E}\left( p\left( U\left( \mathbf{W},\cdot \right) \right)
\right) $ is a sum of splines of degree at most $n-1$, and accordingly also
a spline of the same or lower degree.
\end{proof}

\FloatBarrier%

\section{Concluding remarks}

In the present article we obtain a number of new analytical results,
including explicit formulae for the expected value and density of the voting
weight of the $k$--th largest player in a random weighted voting game, and
for the expected values of product--moments of voting weights, a
characteristic function of the distribution of the total weight of a random
coalition of players, and a general theorem about the functional form of the
relation between any quantity that is a function of the weighted
voting game and the quota. In addition, we note several regularities
appearing in numerical simulations that seem to provide promising subjects
for further study.

The results presented above enhance our understanding of the relationship
between voting game parameters, such as the Coleman efficiency index or
voting power, and the qualified majority quota $q$ in random voting games
where weights are drawn from the uniform distribution on the probability
simplex $\Delta _{n}$. These can have potential applications in the area of
voting rule design, especially if the rules are drafted behind a veil of
ignorance with regard to the actual distribution of players' weights (as is
the case for business corporations). Moreover, the results presented in
Sec.~2, regarding the distribution of voting weights of the $k$--th largest
player and the expected values of product--moments of voting weights, may
find applications in other areas of social choice theory. For instance,
Theorem \ref{thm:kPlayer} can be applied to obtain the probability of a
candidate with a specified vote share winning the election held under the
plurality rule.

Future work will focus on proving Conjecture \ref{con:critPoints};
developing a workable large--$n$ approximation on the basis of the normal
approximation of the Penrose--Banzhaf index; and generalizing the results
presented here for other Dirichlet measures.

\section*{Acknowledgments}

We wish to thank Jaros\l aw Flis for fruitful discussions and Geoffrey Grimmett for his insightful comments on an earlier draft of this article. We are privileged to acknowledge a long-term fruitful interaction with late Friz Haake, which triggered this work.


\section*{Appendix. Proof of Theorem \protect\ref{thm:kPlayer}}

Let $X_{1},\dots ,X_{n}\sim \func{Exp}\left( 1\right) $ be independent
random variables with densities $f_{X_{j}}\left( x\right) :=e^{-x}$ for
every $j=1,\dots ,n$ and $x>0$. As in the proof of Proposition \ref%
{propOrderStats}, we can assume that%
\begin{equation}
W_{k}^{\downarrow }=\frac{X_{k}^{\downarrow }}{\sum_{i=1}^{n}X_{i}}.
\end{equation}

By \citet[(2.1.3)]{DavidNagaraja03}, the order statistic $X_{k}^{\downarrow }$
($k=1,\dots ,n$) has an absolutely continuous distribution with the density given, for $x\in 
\mathbb{R}_{+}$, by%
\begin{equation}
f_{X_{k}^{\downarrow }}(x)=k\dbinom{n}{k}e^{-kx}\left( 1-e^{-x}\right)
^{n-k}.
\end{equation}%
Let $\Psi :=\sum\nolimits_{j=1}^{n}X_{j}$. By the Markov property of order
statistics \citep[Thm. 2.5]{DavidNagaraja03}, the conditional distribution of 
$X_{1}^{\downarrow },\dots ,X_{k-1}^{\downarrow }$ given $X_{k}^{\downarrow
}=y>0$, is the same as the distribution of order statistics $%
Y_{1}^{\downarrow },\dots ,Y_{k-1}^{\downarrow }$ from i.i.d.\ random variables $Y_{1},\dots
,Y_{k-1}$ with $Y_{j}\sim \func{Exp}\left( 1\right) $ truncated to $%
(y,\infty )$ for $j=1,\dots ,k-1$. Likewise, the conditional distribution of 
$X_{k+1}^{\downarrow },\dots ,X_{n}^{\downarrow }$ given $X_{k}^{\downarrow
}=y>0$, is identical to the distribution of order statistics $%
Z_{1}^{\downarrow },\dots ,Z_{n-k}^{\downarrow }$ from i.i.d.\ random variables $Z_{1},\dots
,Z_{n-k}$ such that $Z_{j}\sim \func{Exp}\left( 1\right) $ truncated to $%
(0,y)$ for $j=1,\dots ,n-k$. Moreover, we can choose $Y_{1},\dots ,Y_{k-1}$
and $Z_{1},\dots ,Z_{n-k}$ to be independent. Thus, for their sums we obtain
respectively:%
\begin{equation}
\left( \left( \sum_{j=1}^{k-1}X_{j}^{\downarrow }\right) \Bigg|%
X_{k}^{\downarrow }=y\right) \overset{d}{=}\sum_{j=1}^{k-1}Y_{j}^{\downarrow
}\overset{d}{=}\sum_{j=1}^{k-1}Y_{j},
\end{equation}%
i.e., the sum of $k-1$ independent exponential random variables variables
truncated to $(y,\infty )$, and%
\begin{equation}
\left( \left( \sum_{j=k+1}^{n}X_{j}^{\downarrow }\right) \Bigg|%
X_{k}^{\downarrow }=y\right) \overset{d}{=}\sum_{j=1}^{n-k}Z_{j}^{\downarrow
}\overset{d}{=}\sum_{j=1}^{n-k}Z_{j},
\end{equation}%
i.e., the sum of $n-k$ independent exponential random variables truncated to 
$(0,y)$. But it is easy to see that a sum of $k-1$ left--truncated
independent exponential random variables is a gamma--distributed random
variable with parameters $\left( k-1,1\right) $ shifted by a constant, $%
y\left( k-1\right) $. Thus,%
\begin{equation}\label{eq:psiCond}
\left( \Psi \bigg|X_{k}^{\downarrow }=y\right) =\left( (\Psi
-X_{k}^{\downarrow })\bigg|X_{k}^{\downarrow }=y\right) +y\overset{d}{=}\Xi ,
\end{equation}%
where $\Xi :=\sum_{j=1}^{k-1}Y_{j}+yk+\sum_{j=1}^{n-k}Z_{j}$, and $%
\sum_{j=1}^{k-1}Y_{j}\sim \limfunc{Gamma}\left( k-1,1\right) $ is
independent of $Z_{1},\dots ,Z_{n-k}$. Hence, the characteristic function of
their sum is given by the product of the characteristic functions:%
\begin{equation}
\varphi _{\sum_{j=1}^{k-1}Y_{j}}\left( t\right) :=\left( 1-it\right)
^{-(k-1)},
\end{equation}%
for $t\in \mathbb{R}$, and%
\begin{align}
\varphi _{Z_{j}}\left( t\right) &:=\frac{e^{y}}{1-e^{-y}}%
\int_{0}^{y}e^{itx-x}dx \notag \\
&=\left( 1-it\right) ^{k-n}\,\left( \frac{e^{y}-e^{ity}%
}{e^{y}-1}\right) ^{n-k},
\end{align}%
for $t\in \mathbb{R}$ and $j=1,\dots ,n-k$. Accordingly,%
\begin{equation}
\varphi _{\Xi }\left( t\right) =\left( 1-it\right) ^{1-n}\left( \frac{%
e^{y}-e^{ity}}{e^{y}-1}\right) ^{n-k}\,e^{ityk},
\end{equation}%
Applying the binomial theorem, we obtain%
\begin{multline}
\varphi _{\Xi }(t)=(e^{y}-1)^{k-n}\times \\ \sum\limits_{l=k}^{n}\dbinom{n-k}{l-k}%
\left( 1-it\right) ^{1-n}e^{y(n-l)}e^{i\pi (l-k)}e^{ityl}.
\end{multline}

As $\varphi _{\Xi }$ is integrable, for every $x\in \mathbb{R}_{+}$ we obtain by L\'{e}vy's inversion formula \citep[p.~347, (26.20)]{Billingsley95}:%
\begin{multline}\label{eq:densXi}
f_{\Xi }(x)=\frac{1}{2\pi }\int_{-\infty }^{+\infty }e^{-itx}\varphi _{\Xi
}(t)\,dt =\frac{1}{2\pi }(e^{y}-1)^{k-n}\times \\ \sum\limits_{l=k}^{n}\left( -1\right) ^{l-k}%
\dbinom{n-k}{l-k}e^{y(n-l)}\mathcal{\tciFourier }\left\{ \left( 1-it\right)
^{1-n}\right\} (x-yl).
\end{multline}%
By \citet[\S~3.2~(3), p.~118]{Bateman54}, we have%
\begin{equation}\label{eq:fourier}
\mathcal{\tciFourier }\left\{ \left( 1-it\right) ^{1-n}\right\} (s)=\left\{ 
\begin{array}{cc}
\frac{2\pi s^{n-2}e^{-s}}{\Gamma \left( n-1\right) }, & \qquad s>0, \\ 
0, & \qquad s\leq 0.%
\end{array}%
\right.
\end{equation}%
From (\ref{eq:psiCond}), (\ref{eq:densXi}), and (\ref{eq:fourier}),%
\begin{multline}
f_{\Psi |X_{k}^{\downarrow }\,=\,y}(x)=f_{\Xi }\left( x\right) =(e^{y}-1)^{k-n} \times \\ \sum\limits_{l=k}^{\min \left( n,\left\lfloor
x/y\right\rfloor \right) }\frac{\left( -1\right) ^{l-k}}{\left( n-2\right) !}%
\dbinom{n-k}{l-k}\left( x-yl\right) ^{n-2}e^{ny-x}.
\end{multline}

\noindent Thus, by \citet{Curtiss41}, the density of the ratio is given by 
\begin{gather}
f_{W_{k}^{\downarrow }}\left( x\right) =\int_{0}^{\infty }\left\vert
z\right\vert \,f_{X_{k}^{\downarrow },\Psi }(xz,z)\,dz \notag \\ = \int_{0}^{\infty
}\left\vert z\right\vert \,f_{\Psi |X_{k}^{\downarrow
}\,=\,xz}(z)\,f_{X_{k}^{\downarrow }}(xz)\,dz  \notag \\
=\int_{0}^{\infty }\left\vert z\right\vert \sum\limits_{l=k}^{T(n,x)}\frac{k\left( -1\right) ^{l-k}}{%
\left( n-2\right) !}\dbinom{n-k}{l-k}\dbinom{n}{k}\frac{\left( z-xzl\right)
^{n-2}}{e^{z}}\,dz  \notag \\
=\frac{k}{\left( n-2\right) !}\sum\limits_{l=k}^{T(n,x)}\Gamma \left( n\right)\dbinom{n-k}{l-k}\dbinom{n}{k}\frac{\left( -1\right) ^{l-k}}{\left( 1-lx\right) ^{2-n}}  \notag \\
=n\left( n-1\right) \dbinom{n-1}{k-1}\sum\limits_{l=k}^{T(n,x) }\dbinom{n-k}{l-k}\dbinom{n}{k}\frac{\left( -1\right) ^{l-k}}{\left( 1-lx\right) ^{2-n}},
\end{gather}%
where $T(n,x) := \min \left(
n,\left\lfloor 1/x\right\rfloor \right)$, as desired.

\bibliographystyle{elsarticle-harv}
\bibliography{power}

\end{document}